%Last update 12 Oct 2016 by A.K.

\documentclass[reqno, eucal]{amsart}

\usepackage[mathscr]{eucal} %  use \EuScript (\mathcal unchanged)
\usepackage[dvips]{graphicx}
\usepackage[dvips]{color}
\usepackage{amsmath,amsfonts,amssymb,amsthm,amscd}
\usepackage{epic}
\usepackage{eepic}
\usepackage{longtable}
\usepackage{array}
\usepackage{here}
\usepackage{cases}

\setlength{\oddsidemargin}{0.5cm}
\setlength{\evensidemargin}{0.6cm}
\setlength{\textwidth}{15cm}
\setlength{\textheight}{22cm}

\setlength{\oddsidemargin}{0.5cm}
\setlength{\evensidemargin}{0.6cm}
\setlength{\textwidth}{15cm}
\setlength{\textheight}{22cm}

\setcounter{tocdepth}{2}% for table of contents
\numberwithin{equation}{section}

\newtheorem{theorem}{Theorem}[section]
\newtheorem{lemma}[theorem]{Lemma}

\newtheorem{proposition}[theorem]{Proposition}
\newtheorem{corollary}[theorem]{Corollary}

\theoremstyle{definition}

\newtheorem{example}[theorem]{Example}

%% Macro  %%%%%%%%%%%%%%%%%
\newcommand{\Z}{{\mathbb Z}}
\newcommand{\R}{{\mathscr R}}
\newcommand{\C}{{\mathbb C}}

\newcommand{\s}{{\mathscr S}}

%%%%%%%%%%%%%%%%%%%%%%%%%%%%%%%%%%%%%
\vfuzz2pt % Don't report over-full v-boxes is small
\hfuzz2pt % Don't report over-full h-boxes is small

\begin{document}

\title[Multispecies TAZRP II]{Multispecies totally asymmetric zero range process:\\
II. Hat relation and tetrahedron equation}
\author{Atsuo Kuniba}
\email{atsuo@gokutan.c.u-tokyo.ac.jp}
\address{Institute of Physics, University of Tokyo, Komaba, Tokyo 153-8902, Japan}

\author{Shouya Maruyama}
\email{maruyama@gokutan.c.u-tokyo.ac.jp}
\address{Institute of Physics, University of Tokyo, Komaba, Tokyo 153-8902, Japan}

\author{Masato Okado}
\email{okado@sci.osaka-cu.ac.jp}
\address{Department of Mathematics, Osaka City University, 
3-3-138, Sugimoto, Sumiyoshi-ku, Osaka, 558-8585, Japan}

%\date{\today}

\maketitle

\begin{center}
{\it Dedicated to the memory of Professor Peter Petrovich Kulish}
\end{center}

\vspace{0.5cm}
\begin{center}{\bf Abstract}\end{center}
We consider a three-dimensional (3D)
lattice model associated with the intertwiner of 
the quantized coordinate ring $A_q(sl_3)$, and 
introduce a family of layer to layer transfer matrices on $m\times n$ square lattice.
By using the tetrahedron equation we derive their commutativity and bilinear relations 
mixing various boundary conditions.
At $q=0$ and $m=n$, they lead to a new proof of the steady state probability 
of the $n$-species totally asymmetric zero range process obtained recently 
by the authors, revealing the 3D integrability in the matrix product construction.

\vspace{0.4cm}

\section{Introduction}\label{sec:intro}

This is a continuation of the paper \cite{KMO3} hereafter called Part I, where 
an $n$-species totally asymmetric zero range process ($n$-TAZRP) was introduced.
It is a Markov process on one-dimensional periodic chain $\Z_L$ in which 
$n$-species of particles occupying the sites without an exclusion rule
hop to the adjacent sites only in one direction under the constraint that
smaller species ones have the priority.
For a general background concerning zero range processes, 
see for example \cite{EH, GSS, KL}.

In Part I it was shown that the $n$-TAZRP is the image of a projection from 
another stochastic system called {\em multiline process}. 
The latter has the uniform steady state measure, and by combining these facts, 
the steady state probability of the configuration  
$(\sigma_1, \ldots, \sigma_L)$ in the $n$-TAZRP has been obtained in the 
matrix product form
\begin{align}\label{ptx}
\mathbb{P}(\sigma_1, \ldots, \sigma_L)
=\mathrm{Tr}_{F^{\otimes n(n-1)/2}}(X_{\sigma_1}\cdots X_{\sigma_L}).
\end{align}
The operator $X_\sigma$ has the structure of a 
{\em corner transfer matrix} \cite{Bax} of a vertex model
whose Boltzmann weights are linear operators on 
the Fock space $F$.

Mathematically, the multiline process is 
a Markov process on the crystal \cite{Ka1} 
of a tensor product of the symmetric tensor representation
of the quantum affine algebra $U_q(\widehat{sl}_L)$.
The projection to the $n$-TAZRP is a composition of the 
combinatorial $R$ \cite{NY} which is the quantum $R$ matrix at $q=0$.

The theme of this paper 
is the 3D integrability of the $n$-TAZRP from the viewpoint of the 
{\em tetrahedron equation} \cite{Zam80}.
It is a 3D generalization of the Yang-Baxter equation \cite{Bax} whose 
relevance stands out for the multispecies case $n \ge 2$.
Our main result is a new proof of the matrix product formula (\ref{ptx})
by the so called {\em cancellation mechanism} or the {\em hat relation}
\begin{align}\label{htr}
\sum_{\gamma,\delta}
h^{\alpha,\beta}_{\gamma,\delta}X_\gamma X_\delta
= \hat{X}_\alpha X_\beta - X_\alpha \hat{X}_\beta,
\end{align}
where $h^{\alpha,\beta}_{\gamma,\delta}$ is the element of the 
local Markov matrix defined in (\ref{hrk2}) and (\ref{smta}).
Construction of such 
companion operators ${\hat X}_\alpha$ is a sufficient 
task to prove (\ref{ptx}) as is well known. See for example \cite{DEHP}.
In our setting of the $n$-TAZRP, 
the $X_{\alpha}$ and $\hat{X}_\alpha$ are linear operators 
on the space of ``internal degrees of freedom"  
$F^{\otimes n(n-1)/2}$ 
as depicted in the corner transfer matrix type diagram (\ref{Xz}).
As such, the relation (\ref{htr}) is highly nonlocal 
in the internal space with numerous summands, 
making the direct proof formidable.
To tame them most elegantly is the highlight of the paper.
Our strategy is to upgrade the statement 
by introducing $q$-melting\footnote{Deformation from 
the ``frozen point" $q=0$ rather than $q=1$.}, spectral parameters and embedding into a 
3D lattice model until the point where all the nonlocal commutation relations 
are integrated ultimately into a {\em single} local relation,
and that turns out to be the tetrahedron equation.

The solution of the tetrahedron equation relevant to our problem is  
the 3D $R$-operator obtained as 
the intertwiner of the quantized coordinate ring $A_q(sl_3)$ \cite{KV}.
It was also given in \cite{BS} and the two were identified in \cite{KO}.
It defines a vertex model on a cubic lattice 
whose edges are assigned with spin variables in $\Z_{\ge 0}$
and vertices with polynomial Boltzmann weights in $q$.
We introduce a family of transfer matrices for $m\times n$  layers
which depend on a spectral parameter and are  
labeled with boundary conditions.
By invoking the tetrahedron equation we establish  
their bilinear relations mixing various boundary labels.
Then the hat relation turns out to be a far-reaching 
consequence of a special case of $m=n$ and $q=0$.
We note that all these features are quite parallel with 
the $n$-species totally asymmetric simple exclusion process ($n$-TASEP) 
elucidated in \cite{KMO1, KMO2}.

After the present work, new discrete and continuous time integrable Markov process
on $n$-species of particles have been constructed in \cite{KMMO}.
The continuous time model therein contains the parameters $q$ and $\mu$.
It reproduces the $n$-TAZRP in this paper and Part I at $q=\mu=0$,
the $n$-species $q$-boson process 
in \cite{T} at $\mu=0$ and the model in \cite{T0} at $n=1$.

The paper is organized as follows.
In Section \ref{sec:tazrp}
we briefly recall the $n$-TAZRP and formulate the main result.
In Section \ref{sec:TE}
the 3D $R$-operator and its properties necessary in the later sections
are given.
In Section \ref{sec:lltm} the layer to layer transfer matrices are introduced  
and their bilinear relations are derived.
In Section \ref{sec:appli}  the specialization of the bilinear relation is analyzed and   
the proof of the hat relation is completed.
The paper is readable without consulting Part I, 
although the description of those overlapping parts is brief.

Throughout the paper we use the notations 
$(x)_+ = \max(x,0)$, 
$[i,j]=\{k \in \Z\mid i \le k \le j\}$, the $q$-Pochhammer symbol
$(z; q)_m = \prod_{j=1}^m(1-zq^{j-1})$, 
the $q$-factorial
$(q)_m = (q;q)_m$, the characteristic function 
$\theta(\mathrm{true})=1, 
\theta(\mathrm{false}) =0$ and the Kronecker delta
$\delta^{\alpha_1,\ldots, \alpha_m}_{\beta_1,\ldots, \beta_m} = 
\prod_{j=1}^m\theta(\alpha_j=\beta_j)$.

\section{Definitions and main result}\label{sec:tazrp}

\subsection{\mathversion{bold}$n$-TAZRP}
Let us quickly recapitulate the definition of the $n$-TAZRP.
A more detailed exposition is available in Part I.
Consider a periodic one-dimensional lattice $\Z_L$ of $L$ sites.
There are finitely many $n$-species of particles occupying the sites with 
no exclusion rule.
Thus the local state at a site is specified by the variable of the form
$\alpha=(\alpha_1,\ldots, \alpha_n) \in (\Z_{\ge 0})^n$
meaning that the number of species $a$ particles there 
is $\alpha_a$\footnote{ 
In Part I this multiplicity representation 
was denoted by $(\alpha^1,\ldots, \alpha^n)$ for distinction from 
the alternative multiset representation. 
In this paper we shall use the multiplicity representation only(!)
and ease the notation to $(\alpha_1,\ldots, \alpha_n)$.}.
We set $|\alpha| = \alpha_1+\cdots + \alpha_n$.
Thus a local state $\alpha=(2,0,1,3)$ for example signifies   
the site populated with the $|\alpha|=6$ particles $113444$. 

For two pairs of local states 
$({\gamma},{\delta})$ and $({\alpha}, {\beta})$ we define $>$ by
\begin{align}\label{smta}
(\gamma, \delta) > (\alpha, \beta) 
\overset{\text{def}}{\Longleftrightarrow}
(\alpha_j,\beta_j) = 
\begin{cases}
(\gamma_j+\delta_j,0) & 1 \le j < l,\\
(\gamma_l+d, \delta_l-d) & j=l,\\
(\gamma_j, \delta_j) & l<j \le n
\end{cases}
\;\;
\text{for some $l \in [1,n]$ and $d \in [1, \delta_l]$}.
\end{align}
It means that $(\alpha,\beta)$ is obtained from $(\gamma,\delta)$
by moving the smaller species
$\delta_1+\cdots+\delta_{l-1}+d$ 
particles from $\delta$ to $\gamma$.
Thus a particle can move only when it accompanies {\em all} the 
strictly smaller species ones than itself.
We let $(\gamma,\delta) \ge (\alpha,\beta)$ 
mean $(\gamma,\delta) > (\alpha,\beta)$ 
or $(\gamma,\delta)= (\alpha,\beta)$.
The definition of $>$ is more easily perceivable in terms of 
the multiset representation as in [Part I, eqs.(2.2), (2.4)].

The $n$-TAZRP is a stochastic dynamical system
in which neighboring pairs of local states
$(\sigma_i, \sigma_{i+1})=(\gamma,\delta)$ 
change into $(\alpha,\beta)$ such that 
$(\gamma,\delta) > (\alpha,\beta)$
with a uniform transition rate.
As the dynamics preserves the number of particles of each species,
the problem splits into {\em sectors} labeled with 
{\em multiplicity} ${\bf m}=(m_1,\ldots, m_n) \in (\Z_{\ge 0})^n$ of 
the species of particles:
\begin{align*}
S({\bf m}) = 
\{{\boldsymbol \sigma}=(\sigma_1,\ldots, \sigma_L)\mid
\sigma_i = (\sigma_{i,1}, \ldots, \sigma_{i,n})  \in (\Z_{\ge 0})^n,\;
\sum_{i=1}^L \sigma_{i,a}=m_a,\forall a \in [1,n]\}.
\end{align*}
Without loss of generality we shall exclusively consider the 
{\em basic sector} in which $m_a\ge 1$ for all $a \in [1,n]$.
Denote by ${\mathbb P}(\sigma_1,\ldots, \sigma_L; t)$
the probability of finding the system in the configuration
${\boldsymbol \sigma}=(\sigma_1,\ldots, \sigma_L)$ at time $t$, and let
$|P(t)\rangle
= \sum_{{\boldsymbol \sigma} \in S({\bf m})}
{\mathbb P}(\sigma_1,\ldots, \sigma_L; t)|\sigma_1,\ldots, \sigma_L\rangle$ 
be the vector representing the probability distribution in the basic sector 
$S({\bf m})$.
Our $n$-TAZRP is a Markov process governed by the master equation
$\frac{d}{dt}|P(t)\rangle
= H_{\mathrm{TAZRP}} |P(t)\rangle$,
where the Markov matrix has the form
\begin{align}\label{hrk2}
H_{\mathrm{TAZRP}}  = \sum_{i \in \Z_L} h_{i,i+1},\quad
h |\gamma,\delta\rangle =
\sum_{\alpha,\beta}h_{\gamma,\delta}^{\alpha,\beta}
|\alpha,\beta\rangle,\quad
h_{\gamma,\delta}^{\alpha,\beta}=
\begin{cases}
1 & \text{if }\; (\gamma,\delta) > (\alpha, \beta),\\
-|\beta| & \text{if } \;  (\gamma,\delta) = (\alpha, \beta),\\
0 & \text{otherwise}.
\end{cases}
\end{align}
Here $h_{i,i+1}$ is the local Markov matrix that  
acts as $h$ on the $i$-th and the $(i+1)$-th components and 
as the identity elsewhere.

Let 
$ |\bar{P}_L({\bf m})\rangle=\sum_{{\boldsymbol \sigma} \in S({\bf m})}
\mathbb{P}({\boldsymbol \sigma} ) |{\boldsymbol \sigma} \rangle$
be the {\em steady state} in the sector $S({\bf m})$.
The unnormalized 
$\mathbb{P}({\boldsymbol \sigma} )$ satisfying 
$\sum_{{\boldsymbol \sigma}\in S({\bf m})}
\mathbb{P}({\boldsymbol \sigma} )
= \prod_{a=1}^n \binom{L-1+\ell_a}{\ell_a}$ with
$\ell_a=\sum_{b\in [a,n]}m_b$ will be called the {\em steady state probability}
by abusing the terminology. 
The advantage for this is $\mathbb{P}({\boldsymbol \sigma} ) \in \Z_{\ge 1}$
as we will see.

\subsection{\mathversion{bold}Operators $X_\alpha$ and $\hat{X}_\alpha$}
\label{subsec:mr}
Let $F = \bigoplus_{m\ge 0} \C |m\rangle$ and 
$F^\ast = \bigoplus_{m\ge 0} \C \langle m |$ be the Fock space and its dual
with the bilinear pairing $\langle m | m' \rangle = \delta^{m}_{m'}$.
Let ${\bf a}^{\pm}, {\bf k}$ be the linear operators on them as
\begin{align}\label{iyo}
&{\bf a}^\pm|m\rangle = |m\pm 1\rangle, \quad
{\bf k}|m\rangle = \delta_m^0|m\rangle,
\quad
\langle m |{\bf a}^\pm = \langle m\mp1 |, \quad
\langle m |{\bf k} = \delta_m^0\langle m |
\end{align}
with $|\!-\!1\rangle = 0$ and 
$\langle\!-1| = 0$.
They obey the relations 
\begin{equation}\label{rna}
{\bf k} \ {\bf a}^{+}=0,\ \ \ \ \  \ {\bf a}^{-} \ {\bf k}=0,\ \ \ \ \ 
{\bf a}^+{\bf a}^-=1-{\bf k},\ \ \ \ \ {\bf a}^-{\bf a}^+=1
\end{equation}
corresponding to the $q=0$ case of the 
$q$-oscillator algebra $\mathscr{A}_q$ (\ref{tgm}).
In view of this, the algebra generated by ${\bf a}^{\pm}, {\bf k}$
with the relation (\ref{rna}) will be called the $0$-oscillator algebra 
and denoted by $\mathscr{A}_0$.
Note that ${\bf k}^2={\bf k}$.
The trace over $F$ is defined by 
$\mathrm{Tr}(X) = \sum_{m \ge 0}\langle m | X | m \rangle$
with $\langle m | m'\rangle = \delta^m_{m'}$.
It is easy to see that 
$\{({\bf a}^+)^f({\bf a}^-)^g| f,g  \in \Z_{\ge0}\}$ forms a basis 
of $\mathscr{A}_0$.

For $a,b,i,j \in \Z_{\ge 0}$ we introduce 
$\hat{R}^{a b}_{i j} \in \mathscr{A}_0$ and its diagram representation\footnote{
It is to be understood as the 2D projection of [Part I, eq.(5.6)].} as
\begin{equation} \label{vt}
\begin{picture}(200,38)(-40,-18)
\thinlines

\put(-51,-3){$\hat{R}^{a b}_{i j}=$}
\put(-10,0){\vector(1,0){20}}
\put(0,-10){\vector(0,1){20}}
\put(-17,-3.5){$i$}\put(12.5,-3.5){$a$}
\put(-2.4,13){$b$}\put(-2.3,-19){$j$}
\put(26,-4){$=\delta^{a+b}_{i+j}\theta(a \ge j)
({\bf a}^+)^j{\bf k}^{\theta(a>j)}({\bf a}^-)^b$,}
\thinlines
\end{picture}
\end{equation}
and regard it as the Boltzmann weight of 
the $\mathscr A_{0}$-valued vertex model on the 2D square lattice 
with edge variables $a,b,i,j \in \Z_{\ge0}$. 
The property that $\hat{R}^{a b}_{i j}=0$ 
unless $a+b=i+j$ will be referred to as {\em conservation law}.

The $n$-TAZRP operators $X_\alpha, \hat{X}_\alpha
\in \mathrm{End}(F^{\otimes n(n-1)/2})$ attached to the local state
$\alpha=(\alpha_1,\ldots,\alpha_n) \in (\Z_{\ge0})^n$ are given by
\begin{align}\label{XX}
X_\alpha = X_\alpha(z=1),\quad \hat{X}_\alpha  
= \frac{d}{dz} X_\alpha(z)|_{z=1},
\end{align}
where $X_\alpha(z) \in  \mathrm{End}(F^{\otimes n(n-1)/2})$\footnote{
In this paper we do not bother to write 
$F^{\otimes n(n-1)/2}[z,z^{-1}]$ etc.}
is defined by the following diagram\footnote{
We shall never abbreviate 
$X_\alpha(z)$ to $X_\alpha$, so the latter always means $X_\alpha(1)$.}:
\begin{equation}\label{Xz}
\begin{picture}(300,65)(58,6)

\put(105,0){
\put(10,0){
\put(-20,36){${\displaystyle X_\alpha(z) = 
\sum z^{a_1+a_2+\cdots+a_n}}$}
\put(120,0){
\put(0,-10){
\reflectbox{
\rotatebox[origin=c]{90}{
\put(20,52){$. . .$}
\put(-5,27){$.$}\put(-5,24){$.$}\put(-5,21){$.$}
\put(-7,48){\line(1,0){55}}
\put(-7,40){\line(1,0){47}}
\put(-7,32){\line(1,0){39}}
\put(-7,16){\line(1,0){23}}
\put(-7,8){\line(1,0){15}}
\put(-7,0){\line(1,0){7}}
\put(-9,-9.5){\put(29,25){$.$}\put(27,23){$.$}
\put(25,21){$.$}\put(31,27){$.$}\put(33,29){$.$}}
\put(48,48){\vector(0,1){8}}
\put(40,40){\vector(0,1){16}}
\put(32,32){\vector(0,1){24}}
\put(16,16){\vector(0,1){40}}
\put(8,8){\vector(0,1){48}}
\put(0,0){\vector(0,1){56}}
}}}
\put(48,63){\scriptsize${\alpha_n}$}
\put(10,53){\scriptsize{$\alpha_{n-1}+\alpha_n$}}
\put(-39,13){\scriptsize{$\alpha_1+\cdots+\alpha_n$}}
\put(73,60){\scriptsize{$a_1$}}
\put(73,52){\scriptsize{$a_2$}}
\put(73,40){\scriptsize{$\cdot$}}\put(73,35){\scriptsize{$\cdot$}}
\put(73,30){\scriptsize{$\cdot$}}\put(73,25){\scriptsize{$\cdot$}}
\put(73,20){\scriptsize{$\cdot$}}
\put(73,10){\scriptsize{$a_n$}}
}}

}

\end{picture}
\end{equation}
The sums are taken with respect to all the edge variables over $\Z_{\ge 0}$   
except the corners which are fixed depending on $ \alpha$.
The summand is the {\em tensor product} of 
$\hat{R}^{a b}_{i j} \in \mathscr{A}_0$ (\ref{vt}) attached to each vertex 
which acts on an independent copy of $F$ by (\ref{rna}).
The $X_\alpha(z)$ has the form of a {\em corner transfer matrix} \cite{Bax} of the 
$\mathscr{A}_0$-valued 2D vertex model.
It consists of infinitely many terms but 
the quantities we will deal with become always finite.
For ${\hat X}_{\alpha}$, the differentiation in (\ref{XX}) 
means an extra numerical coefficient 
$a_1+\cdots + a_n$.

\begin{example}\label{ex:n=2}
For $n=2$ the operator $X_{ \alpha}(z)$ is given by
\begin{equation*}
\begin{picture}(600,40)(-25,10)
\thinlines
\setlength\unitlength{0.26mm}
\put(50,0){
\put(-25,37){$X_{\alpha_1,\alpha_2}(z)= \ 
{\displaystyle \sum_{j\ge0}} \ z^{\alpha_1+\alpha_2+j}$}
\put(170,0){
\put(30,25){\line(0,1){35}}%5 
\put(30,60){\vector(1,0){20}}
\put(10,40){\vector(1,0){40}}
\put(10,25){\line(0,1){15}}
\put(15,60){\scriptsize{$\alpha_2$}}
\put(-32,40){\scriptsize{$\alpha_1+\alpha_2$}}
\put(28,15){\scriptsize$j$}
\put(55,37){\scriptsize{$j+\alpha_1$}}

\put(100,37){$= \ z^{\alpha_1+\alpha_2}{\displaystyle \sum_{j\ge0}}z^j({\bf a}^+)^j{\bf k}^{\alpha_1}({\bf a}^-)^{\alpha_2}.$}
}
}
\thinlines
\end{picture}
\end{equation*}
\end{example}

\begin{example}\label{ex:n=3}
For $n=3$ the operator $\hat{X}_{ \alpha}$ is given by
\begin{equation*}
\begin{picture}(600,60)(-59,-17)
\thinlines
\setlength\unitlength{0.26mm}

\put(-55,24){$\hat{X}_{\alpha_1,\alpha_2,\alpha_3}= 
\ {\displaystyle \sum_{i,j,k}}(a_1+a_2+a_3)$}
\put(195,-30){
\reflectbox{
\rotatebox[origin=c]{90}{
\put(0,40){\line(1,0){50}} \put(50,40){\vector(0,1){23}}
\put(0,20){\line(1,0){30}} \put(30,20){\vector(0,1){43}}
\put(0,0){\line(1,0){10}}
\put(10,0){\vector(0,1){63}}
}}
\put(-43,80){\scriptsize{$\alpha_3$}}
\put(-90,60){\scriptsize{$\alpha_2+\alpha_3$}}
\put(-135,40){\scriptsize{$\alpha_1+\alpha_2+\alpha_3$}}
\put(-30,22){\scriptsize$j$}
\put(-50,22){\scriptsize$i$}
\put(-25,48){\scriptsize$k$}

\put(2,80){\scriptsize{$a_1$}}
\put(2,60){\scriptsize{$a_2$}}
\put(2,40){\scriptsize{$a_3$}}

}

\put(5,-27){$= \ {\displaystyle \sum_{i,j,k}}
(|{\alpha}|+i+j)({\bf a}^+)^j
{\bf k}^{\alpha_1+i-k}({\bf a}^-)^k{\otimes}
({\bf a}^+)^k{\bf k}^{\alpha_2}({\bf a}^-)^{\alpha_3}
{\otimes}({\bf a}^+)^i{\bf k}^{\alpha_1}({\bf a}^-)^{\alpha_2+\alpha_3}$.}

\end{picture}
\end{equation*}
\\ 
The sums are taken over $\{i,j,k|i,j\ge0,0\le k \le \alpha_1+i\}$. 
The edge variables corresponding to $a_1,a_2,a_3$ in (\ref{Xz})
have been determined by using the conservation law as
$a_1=\alpha_3$, $a_2=\alpha_2+k$, $a_3=\alpha_1+i+j-k$.
Here and in what follows, 
the components of the tensor product will always be ordered 
so that they correspond, from left to right, 
to the vertices (if exist) at 
$(1,1)$, $(2,1)$, $(1,2)$, $(3,1)$, $(2,2)$, $(1,3), \ldots$, 
where $(i,j)$ is the intersection of the $i$-th horizontal line from the bottom
and the $j$-th vertical line from the right.
\end{example}

\subsection{Main result}
Now we state the main result of this paper. 
\begin{theorem}[Hat relation]\label{main}
For any $\alpha, \beta \in (\Z_{\ge 0})^n$, the operators (\ref{XX}) satisfy
\begin{align*}
\sum_{ \gamma, \delta}h^{\alpha,\beta}_{\gamma,\delta}
X_ { \gamma}X_{ \delta}
= \hat{X}_\alpha X_\beta - X_\alpha \hat{X}_\beta,
\end{align*}
where the sum extends over all the $n$-$\mathrm{TAZRP}$ local states
$\gamma, \delta \in (\Z_{\ge 0})^n$. 
\end{theorem}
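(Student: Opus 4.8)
The plan is to prove the hat relation by \emph{upgrading} it to a $q$-deformed, spectral-parameter-bearing statement at the level of a 3D lattice model, so that the many nonlocal cancellations in the internal space $F^{\otimes n(n-1)/2}$ collapse to a single application of the tetrahedron equation, and then to specialise back to $q=0$, $m=n$ and differentiate at $z=1$. First I would introduce the $q$-oscillator version $\hat{R}(q)$ of the vertex weight (\ref{vt}) and assemble, for the $m\times n$ cuboid, a family of layer-to-layer transfer matrices $T$ depending on a spectral parameter $z$ and on a choice of boundary edge variables on the four ``side'' faces of the cuboid; the operators $X_\alpha(z)$ of (\ref{Xz}) are the $q=0$, triangular ($m=n$) specialisations of such transfer matrices with the boundary data read off from $\alpha$. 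The key identity to establish at this level is a \emph{bilinear relation} of the schematic form
\begin{align*}
\sum_{\gamma,\delta} \mathcal{H}^{\alpha,\beta}_{\gamma,\delta}(q)\, T_\gamma(z) T_\delta(z)
= \bigl(\text{boundary-shifted } T\bigr)\text{ terms},
\end{align*}
obtained by inserting an extra ``cross'' of $\hat R(q)$'s between two stacked copies of the lattice and moving it from one side to the other using the tetrahedron equation repeatedly; since the tetrahedron equation is a local three-dimensional move, each summand on the left matches, term by term, a rearrangement on the right, and the whole nonlocal mess of (\ref{htr}) is accounted for in one stroke.

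Concretely, the order of the steps would be: (i) write $X_\alpha(z)$ as a boundary-labelled corner transfer matrix and identify the $h^{\alpha,\beta}_{\gamma,\delta}$ of (\ref{smta})--(\ref{hrk2}) as the $q\to0$ limit of a local intertwining weight built from $\hat R(q)$; (ii) prove the bilinear relation for the transfer matrices on the full $m\times n$ lattice by repeated use of the tetrahedron equation, tracking carefully how the boundary data on the four side faces transform; (iii) specialise to $m=n$ so that the cuboid degenerates to the triangular geometry of (\ref{Xz}) and the ``boundary-shifted'' $T$'s on the right become precisely $\hat X_\alpha X_\beta$ and $X_\alpha \hat X_\beta$ after the $\frac{d}{dz}|_{z=1}$ operation of (\ref{XX}) is applied --- note that differentiation simply inserts the factor $a_1+\cdots+a_n$ as already observed, and the two terms on the right of the hat relation arise from the two faces (``top'' vs.\ ``bottom'', or equivalently the two ways the spectral parameter enters) that the extra cross can be pushed through; (iv) take $q\to0$, checking that the $q$-Pochhammer normalisations stay finite and that the left-hand coefficients reduce exactly to $h^{\alpha,\beta}_{\gamma,\delta}$. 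The trace in (\ref{ptx}) plays no role in the hat relation itself; it is only the subsequent corollary.

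I expect the main obstacle to be step (ii): setting up the bookkeeping of boundary conditions so that the bilinear relation is \emph{exactly} what one needs after specialisation, rather than something true but off by boundary terms one cannot control. The delicate points are (a) choosing the right auxiliary space and the right initial/final edge configurations for the extra cross so that pushing it across the lattice terminates cleanly (this is where the conservation law and the vanishing conditions $\mathbf{k}\,\mathbf{a}^+ = 0$, $\mathbf{a}^-\mathbf{k}=0$ at $q=0$ do the real work of killing unwanted summands), and (b) verifying that in the $m=n$ triangular degeneration the side faces that carry nontrivial boundary data are exactly two, producing the two terms $\hat X_\alpha X_\beta$ and $X_\alpha \hat X_\beta$ with the correct relative sign. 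A secondary technical nuisance is keeping the $q$-melted weights normalised so that the $z$-derivative and the $q\to0$ limit commute with the (finite) summations; this should follow from the explicit polynomiality of the $3D$ $R$ Boltzmann weights in $q$ cited in the introduction, but it must be stated carefully.
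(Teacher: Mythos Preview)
Your overall architecture --- $q$-melt, introduce layer transfer matrices, prove a bilinear relation via the tetrahedron equation, then specialise to $m=n$, $q=0$ and differentiate --- matches the paper. But two of your steps rest on expectations that are not borne out, and the actual mechanism is different in each case.

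First, the bilinear relation obtained by pushing the auxiliary ``cross'' through the two stacked layers is \emph{not} of the form $\sum \mathcal H(q)^{\alpha,\beta}_{\gamma,\delta}\, T_\gamma T_\delta = (\text{boundary-shifted terms})$. What the tetrahedron equation gives (Proposition~\ref{pr:stt}) still carries the auxiliary Fock space on the green arrow, with a product of $\hat{\mathscr S}$'s on each side. To strip that off and land on an identity purely among the $T$'s one needs the eigenvector property of the 3D $R$ (Lemma~\ref{aaR}): sandwiching between $\langle\chi|$ and $|\chi\rangle$ collapses the relation to a genuinely \emph{symmetric} identity
\[
\sum_{\substack{{\bf b}+{\bf b}'={\bf s}\\ {\bf i}+{\bf i}'={\bf r}}} x^{|{\bf b}|+|{\bf i}|}\, y^{|{\bf b}'|+|{\bf i}'|}\, \mathbb T(x)^{\bf b}_{\bf i}\, \mathbb T(y)^{{\bf b}'}_{{\bf i}'} \;=\; (x\leftrightarrow y),
\]
with no Markov-matrix-like coefficient anywhere. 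Your remark that the $q=0$ rules $\mathbf k\,\mathbf a^+=0$, $\mathbf a^-\mathbf k=0$ are what ``kill unwanted summands'' is not where the cleanup happens; it happens at generic $q$ via this eigenvector trick.

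Second --- and this is the real gap --- $h^{\alpha,\beta}_{\gamma,\delta}$ never arises as the $q\to 0$ limit of a local weight, and $\hat X_\alpha X_\beta - X_\alpha\hat X_\beta$ does not come from ``two faces'' the cross can be pushed through. After specialising to $m=n$, $q=0$, one embeds $X_\alpha(z)$ into $\mathbb T(z)^{0,\ldots,0,r}_{0,\ldots,0,r}$ (Proposition~\ref{pr:sdkf}) and, using that $\{(\mathbf a^+)^f(\mathbf a^-)^g\}$ is a basis of $\mathscr A_0$, separates the symmetric relation above along the diagonal tensor components to obtain, for each fixed $(\alpha,\beta)$,
\[
x^{|\beta|}\!\!\sum_{(\gamma,\delta)\ge(\alpha,\beta)}\! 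X_\gamma(x)X_\delta(y) \;=\; (x\leftrightarrow y).
\]
Differentiating in $x$ and setting $x=y=z$ gives a relation still summed cumulatively over $(\gamma,\delta)\ge(\alpha,\beta)$. The passage to the hat relation is then a purely combinatorial M\"obius-type step: subtract from it the same relation at each minimal element $(\alpha^{(i)},\beta^{(i)})>(\alpha,\beta)$; because $\{(\alpha',\beta')\mid(\gamma,\delta)\ge(\alpha',\beta')\}$ is totally ordered, every $(\gamma,\delta)>(\alpha,\beta)$ is subtracted exactly once, and the coefficients $+1$ and $-|\beta|$ of (\ref{hrk2}) drop out. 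This poset argument is the missing ingredient in your plan; without it you will not see $h$ appear.
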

The left hand side is a finite sum due to (\ref{hrk2}).  
The proof will be achieved in the end of Section \ref{sec:appli},
where a Baxterization, i.e., 
spectral parameter dependent generalization (\ref{ngjj}) 
is obtained.  

\begin{corollary}\label{ss}
The steady state probability of the $n$-$\mathrm{TAZRP}$ is expressed as
\begin{align*}
{\mathbb P}( \sigma_1,\ldots, \sigma_L) 
= \mathrm{Tr}(X_{ \sigma_1}\cdots X_{ \sigma_L}),
\end{align*}
\end{corollary}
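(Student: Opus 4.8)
The plan is to derive Corollary~\ref{ss} from the hat relation (Theorem~\ref{main}) by the standard cancellation (telescoping) argument for matrix product steady states. First I would observe that the vector
\[
|\bar P_L(\mathbf{m})\rangle
= \sum_{\boldsymbol{\sigma}\in S(\mathbf{m})}
\mathrm{Tr}(X_{\sigma_1}\cdots X_{\sigma_L})\,|\boldsymbol{\sigma}\rangle
\]
is a candidate steady state, and I must show $H_{\mathrm{TAZRP}}|\bar P_L(\mathbf{m})\rangle=0$. Since $H_{\mathrm{TAZRP}}=\sum_{i\in\Z_L}h_{i,i+1}$, it suffices to compute the action of a single local term $h_{i,i+1}$ on $|\bar P_L\rangle$ and see that the sum over $i$ telescopes to zero under the trace.

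The key step is the following. Fix a site $i$ and apply $h$ to the pair of coordinates $(\sigma_i,\sigma_{i+1})=(\alpha,\beta)$. Using the entries $h^{\alpha,\beta}_{\gamma,\delta}$ from (\ref{hrk2}), the contribution of $h_{i,i+1}$ to the coefficient of a fixed configuration is, after relabelling, governed by the operator expression $\sum_{\gamma,\delta}h^{\alpha,\beta}_{\gamma,\delta}X_\gamma X_\delta$ inserted at positions $i,i+1$ in the product $X_{\sigma_1}\cdots X_{\sigma_L}$ under the trace. By Theorem~\ref{main} this equals $\hat X_\alpha X_\beta - X_\alpha\hat X_\beta$. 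Hence, writing $Y_i$ for the operator $X_{\sigma_1}\cdots X_{\sigma_{i-1}}(\hat X_{\sigma_i}X_{\sigma_{i+1}}-X_{\sigma_i}\hat X_{\sigma_{i+1}})X_{\sigma_{i+2}}\cdots X_{\sigma_L}$, the coefficient of $|\boldsymbol{\sigma}\rangle$ in $H_{\mathrm{TAZRP}}|\bar P_L\rangle$ is $\sum_{i\in\Z_L}\mathrm{Tr}(Y_i)$. Each $Y_i$ is a difference of two terms of the shape $X_{\sigma_1}\cdots\widehat{X}_{\sigma_j}\cdots X_{\sigma_L}$ with a single hatted factor, and the cyclicity of the trace makes consecutive terms cancel in pairs around the periodic chain $\Z_L$, so $\sum_i\mathrm{Tr}(Y_i)=0$. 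This holds for every $\boldsymbol{\sigma}$, giving $H_{\mathrm{TAZRP}}|\bar P_L(\mathbf{m})\rangle=0$.

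It then remains to address (i) well-definedness of the trace, i.e.\ that $\mathrm{Tr}(X_{\sigma_1}\cdots X_{\sigma_L})$ is a finite number on each tensor factor $F$ of $F^{\otimes n(n-1)/2}$ — this follows because, by the conservation law for $\hat R$ and the corner-transfer structure (\ref{Xz}), in any sector $S(\mathbf m)$ only finitely many summands survive and each acts on $F$ as a finite-rank or trace-class operator built from $({\bf a}^+)^f({\bf a}^-)^g{\bf k}^{\cdots}$; and (ii) uniqueness and positivity of the steady state, so that the matrix-product vector is a scalar multiple of $|\bar P_L(\mathbf m)\rangle$ and then that scalar is fixed by the normalization $\sum_{\boldsymbol\sigma\in S(\mathbf m)}\mathbb P(\boldsymbol\sigma)=\prod_{a=1}^n\binom{L-1+\ell_a}{\ell_a}$. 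Uniqueness holds because the $n$-TAZRP restricted to the basic sector is an irreducible finite Markov chain, and the normalization constant can be checked by a direct trace evaluation (for instance by reducing to the $n=1$ case or by a generating-function computation). The main obstacle I anticipate is bookkeeping in the telescoping step: one must be careful that the diagonal term $-|\beta|$ in (\ref{hrk2}) is correctly absorbed into the left-hand side of the hat relation, and that the cyclic pairing of the singly-hatted monomials is set up with the right signs; once the hat relation is granted, everything else is routine.
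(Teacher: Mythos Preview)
Your proposal is correct and follows essentially the same route as the paper: apply the hat relation (Theorem~\ref{main}) to replace $\sum_{\gamma,\delta}h^{\alpha,\beta}_{\gamma,\delta}X_\gamma X_\delta$ by $\hat X_\alpha X_\beta - X_\alpha\hat X_\beta$ inside the trace and telescope over $i\in\Z_L$ by cyclicity. The paper handles convergence of the trace by citing Part~I rather than arguing from the structure of $\hat R$, and it does not explicitly discuss uniqueness or normalization in the proof (these are implicit from Part~I); your more explicit treatment of points (i) and (ii) is a welcome addition but not a departure in method.
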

where the trace is taken over $F^{\otimes n(n-1)/2}$.
\begin{proof}
The convergence of the trace is guaranteed by 
the argument after [Part I, Th.5.8].
Thus we are left to show $H_{\mathrm{TAZRP}}
\sum_{{\boldsymbol \sigma} = (\sigma_1,\ldots, \sigma_L) \in S({\bf m})}
\mathrm{Tr}(X_{ \sigma_1}\cdots X_{ \sigma_L})
|{\boldsymbol \sigma}\rangle=0$.
The left hand side is equal to  
\begin{align*}
&\sum_{i \in \Z_L}\sum_{\boldsymbol \sigma \in S({\bf m})}
\sum_{\sigma'_i, \sigma'_{i+1}}
\mathrm{Tr}(\cdots X_{\sigma_i}X_{\sigma_{i+1}}\cdots )
h^{\sigma'_i, \sigma'_{i+1}}_{\sigma_i, \sigma_{i+1}}
|\ldots, \sigma'_i, \sigma'_{i+1},\ldots\rangle\\
&= \sum_{\boldsymbol \sigma \in S({\bf m})}\sum_{i \in \Z_L}
\mathrm{Tr}(\cdots (hXX)_{\sigma_i, \sigma_{i+1}}\cdots )
|\ldots, \sigma_i, \sigma_{i+1},\ldots\rangle,
\end{align*}
where $(hXX)_{\alpha, \beta}$ denotes the left hand side of the hat relation 
in Theorem \ref{main}.
After replacing it with the right hand side, 
the sum $\sum_{i \in \Z_L}$ vanishes 
thanks to the cyclicity of the trace.
\end{proof}

\begin{example}
For the $2$-TAZRP on length 3 chain in the sector $S(2,1)$, we have
\begin{align*}
&{\mathbb P}(10,10,01) = \mathrm{Tr}(X_{1,0}X_{1,0}X_{0,1})
=\sum_{i,j,k\ge0}\mathrm{Tr}\bigl(({\bf a}^+)^i{\bf k}({\bf a}^+)^j{\bf k}({\bf a}^+)^k{\bf a}^-\bigl)=1,\\
&{\mathbb P}(00,20,01) = \mathrm{Tr}(X_{0,0}X_{2,0}X_{0,1})
=\sum_{i,j,k\ge0}\mathrm{Tr}\bigl(({\bf a}^+)^i({\bf a}^+)^j{\bf k}({\bf a}^+)^k{\bf a}^-\bigl)=2,
\end{align*}
reproducing the coefficients of 
$|1,1,2\rangle$ and $|\emptyset,11,2\rangle$ 
in $|\xi_3(2,1)\rangle$ in [Part I, Ex. 2.1], respectively.
\end{example}

In the following sections, 
we will identify the operator $X_\alpha(z)$ (\ref{Xz}) as
a piece of the layer to layer transfer matrix in a 3D system as in (\ref{S=X}).

\section{3D $R$-operators and tetrahedron equation}\label{sec:TE}

Here we introduce the $q$-version of the objects in the previous section 
such as $F, F^\ast, {\bf a}^\pm, {\bf k}$.
For simplicity we use the same notation for them.

\subsection{\mathversion{bold}$q$-oscillators}
Let $q$ be a generic parameter. 
Let $F = \bigoplus_{m \ge 0}\C(q)|m\rangle$ and 
$F^\ast = \bigoplus_{m \ge 0}\C(q)\langle m|$ be the Fock space and its dual
with the bilinear pairing $\langle m | m'\rangle = (q^2)_m\delta^m_{m'}$.
Let ${\bf a}^+,{\bf a}^-,{\bf k}$  be the operators acting on them as 
\begin{equation}\label{irte}
\begin{split}
{\bf a}^+|m\rangle &=|m+1\rangle, 
\quad {\bf a}^-|m\rangle=(1-q^{2m})|m-1\rangle, \quad 
{\bf k}|m\rangle=q^m|m\rangle,\\
\langle m |{\bf a}^- &=\langle m+1|,\quad
\langle m |{\bf a}^+=(1-q^{2m})\langle m-1|,\quad
\langle m |{\bf k} = q^m\langle m |.
\end{split}
\end{equation} 
They satisfy
\begin{equation}\label{tgm}
{\bf k} \, {\bf a}^{\pm}=q^{\pm1}{\bf a}^{\pm}\,{\bf k},
\ \ \ \ \ {\bf a}^+\,{\bf a}^-=1-{\bf k}^2,
\ \ \ \ \ {\bf a}^-\,{\bf a}^+=1-q^2{\bf k}^2.
\end{equation}
The pairing fulfills $\langle m| (X|m'\rangle) = (\langle m|X)|m'\rangle$.
The algebra generated by ${\bf a}^\pm, {\bf k}$ with these relations will be called
the $q$-oscillator algebra ${\mathscr A}_q$. 
It reduces to $\mathscr{A}_0$ in Section \ref{subsec:mr} at $q=0$.

\subsection{\mathversion{bold}3D $R$-operator with spectral parameter}
Define the operators
$\hat{\R}^{ab}_{ij}(z), \hat{\s}^{ab}_{ij}(z) \in \mathscr{A}_q$ 
depending on the {\em spectral parameter} $z$ by\footnote{
Likewise for $F$, we do not bother to write 
$\mathscr{A}_q[z,z^{-1}]$ etc. in this paper.}
\begin{equation}\label{rsdef}
\hat{\R}^{ab}_{ij}(z)=
\hat{\s}^{ba}_{ji}(z^{-1}) = 
\delta^{a+b}_{i+j}\,z^{j-b}\!\!\sum_{\lambda+\mu=b}(-1)^\lambda 
q^{\lambda+\mu^2-ib}\binom{i}{\mu}_{\!q^2}
\binom{j}{\lambda}_{\!q^2}({\bf a}^-)^{\mu}({\bf a}^+)^{j-\lambda}
{\bf k}^{i+\lambda-\mu}
\end{equation}
for $a,b,i,j \in \Z_{\ge 0}$.
The sum extends over $\lambda \in [0,j], \mu\in [0,i]$ such that 
$\lambda+\mu=b$, and 
$\binom{m}{j}_{\!q}=\frac{(q)_m}{(q)_j(q)_{m-j}}$ is the 
$q$-binomial coefficient. 
We depict them as 3D vertices as follows:
\begin{equation}\label{3Dpic}
\begin{picture}(220,50)(-310,-22)
\thinlines
\put(-280,5){
\put(-75,-7){$\hat{\R}^{ab}_{ij}(z)\;=$}

\rotatebox{20}{
{\linethickness{0.2mm}
\put(13,-4){\color{blue}\vector(-1,0){40}}}}

\put(-5,-18){\vector(0,1){32}}
\put(-20,0){\vector(3,-1){33}}

\put(-26,0){$\scriptstyle{i}$}
\put(-6,17){$\scriptstyle{b}$}
\put(-6,-26){$\scriptstyle{j}$}
\put(15,-14){$\scriptstyle{a}$}
}

\put(-100,5){
\put(-75,-7){$\hat{\s}^{ab}_{ij}(z)\;=$}

\rotatebox{20}{
{\linethickness{0.2mm}
\put(13,-4){\color{green}\vector(-1,0){40}}}}

\put(-5,-18){\vector(0,1){32}}
\put(-20,0){\vector(3,-1){33}}

\put(-26,0){$\scriptstyle{i}$}
\put(-6,17){$\scriptstyle{b}$}
\put(-6,-26){$\scriptstyle{j}$}
\put(15,-14){$\scriptstyle{a}$}
}
\end{picture}
\thinlines
\end{equation}
The blue and green arrows are to be understood as the operators in
$\mathscr{A}_q$ which act either on $F$ or $F^\ast$. 
The two colors are used to distinguish 
$\hat{\R}^{ab}_{ij}(z)$ and $\hat{\s}^{ab}_{ij}(z)$. 
Although they are simply related by the interchange 
$(a,i,z) \leftrightarrow (b,j,z^{-1})$, 
keeping the both will turn out to be useful in our working below.
The $z$-dependence not exhibited in the diagrams will be specified 
whenever necessary.

By a direct calculation we find
\begin{alignat}{2}
&\hat{\R}^{ab}_{ij}(z) |k\rangle 
= z^{j-b}\sum_c \R^{abc}_{ijk} |c\rangle, &
&\langle c|\hat{\R}^{ab}_{ij}(z) = z^{j-b}\sum_k\frac{(q^2)_c}{(q^2)_k}
\R^{abc}_{ijk}\langle k|, 
\label{arkr}\\
&\hat{\s}^{ab}_{ij}(z) |k\rangle 
= z^{j-b}\sum_c\mathscr{R}^{bac}_{jik} |c\rangle,
&\qquad
&\langle c|\hat{\s}^{ab}_{ij}(z) = z^{j-b}\sum_k\frac{(q^2)_c}{(q^2)_k}
\mathscr{R}^{bac}_{jik}\langle k|.
\label{csz}
\end{alignat}
Here and in what follows, the sum like $\sum_c$ means $\sum_{c \in \Z_{\ge 0}}$
unless otherwise stated. 
The coefficient $\mathscr{R}^{abc}_{ijk}$ is given by
\begin{equation}\label{def:R}
\R^{abc}_{ijk}=\delta^{a+b}_{i+j}\delta^{b+c}_{j+k}
\sum_{\lambda+\mu=b}(-1)^{\lambda}
q^{i(c-j)+(k+1)\lambda+\mu(\mu-k)}
\frac{(q^2)_{c+\mu}}{(q^2)_c}\binom{i}{\mu}_{\!q^2}
\binom{j}{\lambda}_{\!q^2} \in \Z[q]
\end{equation}
with the sum taken under the same condition as (\ref{rsdef}).
Based on $\hat{\R}^{ab}_{ij}(z), \hat{\s}^{ab}_{ij}(z)$ we introduce
\begin{align*}
\R(z)|i,j,k\rangle &=\sum_{a,b}|a\rangle \otimes |b\rangle
\otimes \hat{\R}^{ab}_{ij}(z)|k\rangle
= \sum_{a,b,c}
z^{j-b}\R^{abc}_{ijk}|a,b,c\rangle, 
\\
\s(z)|i,j,k\rangle &= \sum_{a,b} |a\rangle
\otimes |b\rangle \otimes \hat{\s}^{ab}_{ij}(z)|k\rangle
= \sum_{a,b,c}
z^{j-b}\R^{bac}_{jik}|a,b,c\rangle,
\end{align*}
where $|i,j,k\rangle=|i\rangle \otimes |j\rangle \otimes |k\rangle$ etc. 
We call $\R(z), \s(z) \in \mathrm{End}(F^{\otimes 3})$ 
the {\em 3D $R$-operators}.
Naturally they are depicted by the corresponding diagrams in (\ref{3Dpic})
with no specification of the values $a,b,i,j$.
Set $\langle i,j,k| = \langle i| \otimes  \langle j| \otimes  \langle k|$
similarly. 
Then the right action of $\R(z)$ on $F^\ast$ is defined as
$\langle a,b,c|\R(z)
= \sum_{i,j,k} z^{j-b}\R^{ijk}_{abc}\langle i,j,k|$
to be compatible with 
$(\langle a,b,c|\R(z))|i,j,k\rangle = 
\langle a,b,c| (\R(z) |i,j,k\rangle)$ and 
the rightmost relation in (\ref{R:pro}).
We denote the {\em constant} 3D $R$-operator by\footnote{
We shall never abbreviate $\R(z)$ to $\R$, so the latter always means  $\R(1)$.}
\begin{align*}
\R=\R(1).
\end{align*}
The full 3D $R$-operators depending on
the spectral parameter are recovered from it by
\begin{equation}\label{hRh}
\R(z)_{123}=z^{-{\bf h}_2}\R_{123}z^{{\bf h}_2}
=z^{{\bf h}_1}\R_{123}z^{-{\bf h}_1}, \quad
\s(z)_{123}=z^{-{\bf h}_2}\R_{213}z^{{\bf h}_2}
=z^{{\bf h}_1}\R_{213}z^{-{\bf h}_1}
 \end{equation}
in terms of the ${\bf h} $ acting on $F$ and $F^\ast$ by
${\bf h}|m\rangle=m|m\rangle$ and $\langle m|{\bf h}=\langle m|m$.
The indices $1,2,3$ specify the copies of $F$ in  
$\overset{1}{F}\otimes\overset{2}{F}\otimes\overset{3}{F}$
on which these operators operators act. 
 
\subsection{Tetrahedron equation}

The constant 3D $R$-operator $\R$ has the following properties:
\begin{align}
&[\R_{123},x^{{\bf h}_1}(xy)^{{\bf h}_2}y^{{\bf h}_3}]=0, 
\label{comm}\\
&\R=\R^{-1}, \quad 
\R_{123}=\R_{321}, \quad 
\R^{abc}_{ijk}=\frac{(q^2)_i(q^2)_j(q^2)_k}{(q^2)_a(q^2)_b(q^2)_c}\R^{ijk}_{abc}. 
\label{R:pro}
\end{align}
The first one, where $x, y$  are generic parameters, 
follows straightforwardly from (\ref{def:R}). 
For (\ref{R:pro}) see \cite{KO} where the definitions of $\mathscr{R}$ and 
$\mathscr{R}^{abc}_{ijk}$ are identical with this paper.
The second relation means $\R^{abc}_{ijk} = \R^{cba}_{kji}$.
The most significant property of the 3D $R$-operators is the following.
\begin{theorem}[Tetrahedron equation with spectral parameter]\label{th:te}
Set $z_{ij}=z_i/z_j$ where $z_1,\ldots, z_4$ are generic.
As an operator on $F^{\otimes 6}$, the following equality holds:
\begin{equation}\label{TEz}
\s(z_{12})_{126}\s(z_{34})_{346}\R(z_{13})_{135}\R(z_{24})_{245}
=\R(z_{24})_{245}\R(z_{13})_{135}\s(z_{34})_{346}\s(z_{12})_{126}.
\end{equation}
\end{theorem}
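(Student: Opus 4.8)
The plan is to reduce the spectral-parameter tetrahedron equation (\ref{TEz}) to the \emph{constant} version, i.e.\ the case $z_1=\cdots=z_4$, and then invoke the already-established identification of $\R$ with the $A_q(sl_3)$ intertwiner, for which the constant tetrahedron equation is known \cite{KV,KO,BS}. First I would rewrite every factor on both sides of (\ref{TEz}) using the conjugation formulas (\ref{hRh}), which express $\R(z)_{abc}$ and $\s(z)_{abc}$ as $z^{\pm{\bf h}}$-conjugates of the constant operators $\R_{abc}$, $\R_{bac}$. The key point is that each $\R$ or $\s$ appearing in (\ref{TEz}) carries the auxiliary index $5$ or $6$, and the two sides differ only by moving all factors with index $6$ to the outside (left on the LHS, right on the RHS) past the two factors with index $5$. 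So it is natural to conjugate the index-$6$ factors by powers of ${\bf h}_6$ and the index-$5$ factors by powers of ${\bf h}_5$, choosing the exponents so that all the scalar $z_i$-dependence is pushed into diagonal operators acting on spaces $1,2,3,4$ only.

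Concretely, I would track the spectral parameters as follows. Using (\ref{hRh}) write $\s(z_{12})_{126}=z_{12}^{-{\bf h}_2}\R_{216}z_{12}^{{\bf h}_2}$, $\s(z_{34})_{346}=z_{34}^{-{\bf h}_4}\R_{436}z_{34}^{{\bf h}_4}$, $\R(z_{13})_{135}=z_{13}^{-{\bf h}_3}\R_{135}z_{13}^{{\bf h}_3}$, $\R(z_{24})_{245}=z_{24}^{-{\bf h}_4}\R_{245}z_{24}^{{\bf h}_4}$, and similarly for the right-hand side but choosing the \emph{other} representative in (\ref{hRh}) where convenient (e.g.\ $\R(z)_{123}=z^{{\bf h}_1}\R_{123}z^{-{\bf h}_1}$) so that adjacent conjugating exponents telescope. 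Since the ${\bf h}_i$ for distinct $i$ commute and each $\R_{abc}$ commutes with the total weight operator by (\ref{comm}), one can slide the scalar diagonal factors $z_i^{{\bf h}_i}$ through the $\R$'s and $\s$'s at the cost of rescaling spectral parameters according to the conservation law. The upshot should be that, after conjugating the whole identity (\ref{TEz}) by a single operator of the form $z_1^{{\bf h}_1}z_2^{{\bf h}_2}z_3^{{\bf h}_3}z_4^{{\bf h}_4}$ (times possibly $z_5^{{\bf h}_5}, z_6^{{\bf h}_6}$), all $z_i$ dependence cancels and what remains is exactly the constant tetrahedron equation
\begin{equation*}
\R_{216}\R_{436}\R_{135}\R_{245}=\R_{245}\R_{135}\R_{436}\R_{216}
\end{equation*}
(or the equivalent form with some indices transposed via $\R_{abc}=\R_{cba}$ from (\ref{R:pro})), which holds by \cite{KO}.

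The main obstacle is bookkeeping: one must verify that the conjugation exponents can be chosen \emph{consistently} so that every $z_i$ cancels simultaneously on both sides, and that the index-$5$ and index-$6$ conjugations do not interfere with the index-$\{1,2,3,4\}$ ones. This is where (\ref{comm}) does the real work — it lets one commute $z_i^{{\bf h}_i}$ past any $\R_{jkl}$ with $i\in\{j,k,l\}$ by converting it into a redistribution of $z_i$ among the three tensor slots, and since the net flow of each $z_i$ around the relation must match on both sides (because both sides are the same permutation of tensor factors up to the auxiliary move), consistency is forced by the conservation law rather than needing to be imposed by hand. An alternative, if the direct telescoping is messy, is to prove (\ref{TEz}) by checking equality of matrix elements: apply both sides to $|i_1,\dots,i_6\rangle$, use (\ref{arkr})--(\ref{csz}) to expand in the $\R^{abc}_{ijk}$ coefficients, factor out the overall power of each $z_i$ (which is determined purely by the incoming and outgoing weights on line $i$, hence the same on both sides), and reduce to the known constant identity for the numerical coefficients; but the conjugation argument is cleaner and I would present that.
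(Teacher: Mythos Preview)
Your approach is correct and essentially identical to the paper's: substitute (\ref{hRh}), use (\ref{comm}) to commute the diagonal factors, and conjugate the whole equation by a monomial in $z_i^{{\bf h}_j}$ to strip all spectral-parameter dependence, reducing to the constant tetrahedron equation $\R_{216}\R_{436}\R_{135}\R_{245}=\R_{245}\R_{135}\R_{436}\R_{216}$ known from \cite{KO}. The paper simply records the explicit conjugating operator $z_{13}^{-{\bf h}_1}z_{23}^{-{\bf h}_2}z_{34}^{{\bf h}_4}$ (three factors suffice, not four, and no ${\bf h}_5,{\bf h}_6$ are needed), which resolves the bookkeeping concern you flagged.
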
 
\begin{proof}
By substituting (\ref{hRh}) into (\ref{TEz}) and applying (\ref{comm}), 
one finds that the similarity transformation
$z_{13}^{-{\bf h}_1}z_{23}^{-{\bf h}_2}z_{34}^{{\bf h}_4}(\ref{TEz})
z_{13}^{{\bf h}_1}z_{23}^{{\bf h}_2}z_{34}^{-{\bf h}_4}$
removes the $z$-dependence completely
reducing it to the constant tetrahedron equation
$\R_{216}\R_{436}\R_{135}\R_{245}
=\R_{245}\R_{135}\R_{436}\R_{216}$.
Due to the second property in (\ref{R:pro}), 
this coincides with \cite[eq.(2.40)]{KO} with the indices changed as 
$1 \leftrightarrow 5$,\ $3 \to 2$,\ $2 \to 4$,\ $4 \to 3$.
\end{proof}
Graphically the tetrahedron equation (\ref{TEz}) is expressed as follows:
\begin{equation}\label{kiip}
\begin{picture}(400,87)(-80,-1)
\setlength{\unitlength}{0.32mm}
\thinlines

%LHS
%5
\put(59,50){$5$}
\rotatebox{35}{\put(70,14){\color{blue}\vector(-1,0){70}}}

%3
\put(7,2){$3$}
\put(10,12){\line(0,1){38}}\qbezier(10,50)(10,51.5)(10,53)
\put(10,53){\vector(2,3){26}}

%4
\put(33,30){$4$}
\put(41,34){\line(0,1){28}}\put(40,65){\vector(-1,1){25}}
\qbezier(41,62)(40.5,63.5)(40,65)

%2
\put(16,42){$2$}
\put(25,45){\line(1,0){30}}\put(58,44){\vector(3,-2){27}}
\qbezier(55,45)(56.5,44.5)(58,44)

%1
\put(-14,22){$1$}
\put(-5,24){\line(1,0){37}}
\qbezier(32,24)(34.5,24.5)(37,25)
\put(37,25){\vector(3,1){48}}

%6
\put(1,77){$6$}
{\color{green}\qbezier(10,79)(67,80)(73.5,19)}
\put(74,18){\color{green}\vector(1,-4){1}}

\put(115,40){$=$}
%RHS
\put(190,31){

%1
\put(-48,23){$1$}
\put(-41,26){\line(3,1){51}}\put(13,43){\vector(1,0){43}}
\qbezier(10,43)(11.5,43)(13,43)

%2
\put(-46,39){$2$}
\put(-14,23){\line(-3,2){23}}\put(-12,22){\vector(1,0){38}}
\qbezier(-14,23)(-13,22.5)(-12,22)

%3
\put(12,-40){$3$}
\put(42,24){\vector(0,1){34}}
\qbezier(42,24)(41.5,21)(39.8,18)
{\rotatebox[origin=l]{-10}{\put(31.5,24){\line(-1,-3){17}}}}

%4
\put(23,-30){$4$}
\put(-8,1){
\put(10,0){\vector(0,1){38}}\put(11,-2){\line(1,-1){20}}
\qbezier(10,0)(10.5,-1)(11,-2)}

%5
\put(44,50){$5$}
\put(-20,-2){
\rotatebox{35}{\put(70,14){\color{blue}\vector(-1,0){70}}}}

%6
\put(-39,50){$6$}
\put(-8,9){
{\color{green}\qbezier(-32,37)(-32,-23)(35,-22)}
\put(34.5,-22){\color{green}\vector(1,0){1}}}

}

\end{picture}
\thinlines
\end{equation}
Every arrow carries a Fock space $F$.
The spectral parameters in (\ref{TEz}) 
can consistently be encoded by assigning 
$z_1,\ldots, z_4$ to the black arrows $1,\ldots, 4$, respectively.

The constant 3D $R$-operator $\R$ with a formula like (\ref{def:R}) 
was obtained in \cite{KV} (albeit with misprint) 
as an intertwiner of irreducible representations 
of the quantized coordinate ring $A_q(sl_3)$. 
By the construction it satisfies the constant tetrahedron equation.
The $\R$ was also given in \cite{BS,BMS} in a different gauge 
from a quantum geometry consideration.
The two were identified in \cite[eq.(2.29)]{KO}.
The operator (\ref{rsdef}) was introduced in \cite{Ku} for $z=1$.
Theorem \ref{th:te} is a slight generalization of the constant tetrahedron equation
by the spectral parameters.
Although their dependence is of simply removable nature,
the relation (\ref{TEz}) turns out to be essential 
in our analysis of the $n$-TAZRP.

\subsection{\mathversion{bold}Eigenvectors of the 3D $R$-operator}
We introduce the following definitions\footnote{
The $|\chi(z)\rangle$ and $\langle \chi(z)|$ should actually 
be considered in a completion of $F$ and $F^\ast$.} :
\begin{align}
&|\chi(z)\rangle=\sum_{m \ge 0}\chi_m(z)|m \rangle \in F,
\quad
\langle \chi(z)|=\sum_{m \ge 0} \chi_m(z)\langle m| \in F^\ast,
\label{xvec}\\
&\chi_m(z)=\chi_m \,z^m,\quad
\chi'_m(z) = \chi'_m \,z^m,\quad
\chi_m = \frac{1}{(q)_m},\quad
\chi'_m = \frac{(q^2)_m}{(q)_m}.
\label{nska}
\end{align}

\begin{lemma} $\mathrm{(}$\cite[Pro. 4.1]{KS}$\mathrm{)}$ 
\label{le:ks}
The constant 3D $R$-operator has the eigenvectors as
\begin{align*}
\R\bigl(
|\chi(x)\rangle {\otimes} |\chi(xy)\rangle {\otimes} |\chi(y)\rangle
\bigr)
&=|\chi(x)\rangle {\otimes} |\chi(xy)\rangle {\otimes} |\chi(y)\rangle,
\\
\bigl(\langle \chi(x) | \otimes
\langle \chi(xy) | \otimes 
\langle \chi(y) | \bigr)\R 
&= \langle \chi(x) | \otimes
\langle \chi(xy) | \otimes 
\langle \chi(y) |.
\end{align*}
\end{lemma}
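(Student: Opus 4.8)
The plan is to prove the statement by direct computation, reducing it to an already known identity. Since the 3D $R$-operator $\R = \R(1)$ is the intertwiner of $A_q(sl_3)$, the natural strategy is to show that the vector $|\chi(x)\rangle \otimes |\chi(xy)\rangle \otimes |\chi(y)\rangle$ is the image of a highest (or lowest) weight vector under the intertwiner, so that the eigenvector property becomes the statement that the intertwiner sends such a vector to the ``same'' vector in the other tensor order. But more concretely, the cleanest route is to use the explicit coefficients: by definition $\R\,|i,j,k\rangle = \sum_{a,b,c} \R^{abc}_{ijk}\,|a,b,c\rangle$ (here at $z=1$), so the first claimed identity is equivalent to
\begin{align}\label{plan:coeffid}
\sum_{i,j,k \ge 0} \chi_i(x)\,\chi_j(xy)\,\chi_k(y)\,\R^{abc}_{ijk}
= \chi_a(x)\,\chi_b(xy)\,\chi_c(y)
\end{align}
for all $a,b,c \ge 0$. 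First I would substitute the product form $\chi_m(z) = z^m/(q)_m$ and the conservation laws $\delta^{a+b}_{i+j}\delta^{b+c}_{j+k}$ built into $\R^{abc}_{ijk}$ (from (\ref{def:R})): these force $i = a + \ell$, $k = c + \ell$, $j = b - \ell$ for a single summation index $\ell$ (with $0 \le \ell \le b$ and $j \ge 0$), and crucially the monomial weight $x^i (xy)^j y^k$ then collapses to $x^a (xy)^b y^c$ times $x^\ell y^\ell / (xy)^\ell = 1$, so \emph{all the spectral/weight variables drop out} and (\ref{plan:coeffid}) becomes a pure $q$-identity with no free parameters other than $q$.

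Next I would write out that residual $q$-identity explicitly. After the substitution, the left side of (\ref{plan:coeffid}) divided by $\chi_a(x)\chi_b(xy)\chi_c(y)$ becomes a sum over $\ell$ and over the inner index ($\lambda + \mu = j = b - \ell$ from the definition of $\R^{abc}_{ijk}$) of a ratio of $q$-factorials and $q$-binomials times a sign $(-1)^\lambda$ and a power of $q$. The claim is that this double sum equals $1$. I expect this to reduce, after reorganizing the summations, to a known terminating $q$-hypergeometric summation — most likely the $q$-binomial theorem or the $q$-Vandermonde / $q$-Chu--Vandermonde identity $\sum_{\ell} (-1)^\ell q^{\binom{\ell}{2}} \binom{m}{\ell}_{q} \cdots = \delta$-type telescoping — applied once to collapse the inner sum and once more for the outer one. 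An alternative, and perhaps the slicker, path is to invoke the property (\ref{comm}): since $\R_{123}$ commutes with $x^{{\bf h}_1}(xy)^{{\bf h}_2}y^{{\bf h}_3}$, and $|\chi(x)\rangle \otimes |\chi(xy)\rangle \otimes |\chi(y)\rangle$ is (up to the overall weight) independent of $x,y$ in the sense just described, one can specialize or differentiate in $x, y$ to extract enough linear relations; but I would only fall back on this if the direct $q$-series manipulation proves unwieldy.

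For the second (dual) identity $\bigl(\langle\chi(x)|\otimes\langle\chi(xy)|\otimes\langle\chi(y)|\bigr)\R = \langle\chi(x)|\otimes\langle\chi(xy)|\otimes\langle\chi(y)|$, I would not repeat the computation but instead deduce it from the first using the symmetry relation $\R^{abc}_{ijk} = \frac{(q^2)_i(q^2)_j(q^2)_k}{(q^2)_a(q^2)_b(q^2)_c}\,\R^{ijk}_{abc}$ from (\ref{R:pro}). Indeed, the right action is $\langle a,b,c|\R = \sum \R^{ijk}_{abc}\langle i,j,k|$, so pairing with $\langle\chi|$ on the left amounts to summing $\chi_a\chi_b\chi_c$ against $\R^{ijk}_{abc}$; rewriting $\R^{ijk}_{abc}$ via the symmetry turns this into the already-proved sum, provided one checks that $\chi'_m = (q^2)_m/(q)_m = (q^2)_m \chi_m$ is exactly the factor needed — which is why the excerpt introduces $\chi'_m$ alongside $\chi_m$ in (\ref{nska}). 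So the dual statement should follow formally, the only care being the bookkeeping of the $(q^2)_m$ normalization of the bilinear pairing $\langle m|m'\rangle = (q^2)_m\delta^m_{m'}$.

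The main obstacle I anticipate is purely the combinatorial identity in the middle step: collapsing the double $q$-sum $\sum_{\ell,\lambda,\mu}$ to the constant $1$ requires recognizing it as an instance of a standard $q$-summation and arranging the powers of $q$ and the binomials into exactly the right shape, and sign-bookkeeping in $(-1)^\lambda q^{\mu^2 - \cdots}$ is error-prone. Everything else — the conservation-law substitution, the cancellation of $x$ and $y$, and the passage to the dual vector — is essentially formal.
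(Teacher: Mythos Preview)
The paper does not prove this lemma at all: it is quoted verbatim from \cite[Prop.~4.1]{KS}, so there is no argument here to compare your plan against. Your direct computational strategy---use the conservation laws in $\R^{abc}_{ijk}$ to see that the weight $x^i(xy)^jy^k$ collapses to $x^a(xy)^by^c$, then reduce to a pure $q$-identity---is the standard way such eigenvector statements are verified, and is essentially what one finds in \cite{KS}. One small correction: your parametrization $i=a+\ell,\;j=b-\ell,\;k=c+\ell$ is right, but the range is $\ell\in[\max(-a,-c),\,b]$, not $[0,b]$; negative $\ell$ (i.e.\ $j>b$) does occur. This does not affect the cancellation of $x,y$.

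There is, however, a genuine loose end in your treatment of the dual identity. Applying the symmetry $\R^{ijk}_{abc}=\frac{(q^2)_a(q^2)_b(q^2)_c}{(q^2)_i(q^2)_j(q^2)_k}\R^{abc}_{ijk}$ as you propose converts the dual statement into
\[
\sum_{a,b,c}\chi'_a\chi'_b\chi'_c\,\R^{abc}_{ijk}=\chi'_i\chi'_j\chi'_k,
\]
which is a sum over the \emph{upper} indices with the lower ones fixed, and with $\chi'$ in place of $\chi$. That is not the identity you already proved (which sums over lower indices with $\chi$), so ``follows formally'' overstates it. The clean fix is to use $\R=\R^{-1}$ from (\ref{R:pro}) instead: starting from the first identity $\chi_a\chi_b\chi_c=\sum_{i',j',k'}\chi_{i'}\chi_{j'}\chi_{k'}\R^{abc}_{i'j'k'}$, multiply by $\R^{ijk}_{abc}$, sum over $a,b,c$, and use $\sum_{a,b,c}\R^{ijk}_{abc}\R^{abc}_{i'j'k'}=\delta^{ijk}_{i'j'k'}$ to obtain $\sum_{a,b,c}\chi_a\chi_b\chi_c\R^{ijk}_{abc}=\chi_i\chi_j\chi_k$ directly. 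With that adjustment your plan is complete.
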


We adapt it to the form applicable to our analysis in the next section.

\begin{lemma}\label{aaR}
\begin{equation}\label{aaR>}
\sum_{i,j}\chi_i(\lambda)\chi_j(\mu)
\hat{\s}^{ab}_{ij}(z)|\chi(\textstyle \frac{\lambda}{\mu z})\rangle
=\chi_a(\lambda)\chi_b(\mu)
|\chi(\textstyle \frac{\lambda}{\mu z})\rangle,
\end{equation}
\begin{equation}\label{aa<R}
\sum_{a,b}\chi'_a(\lambda)\chi'_b(\mu)\langle 
\chi(\textstyle \frac{\lambda z}{\mu})|
\hat{\s}^{ab}_{ij}(z)
=\chi'_i(\lambda)\chi'_j(\mu)\langle
\chi(\textstyle \frac{\lambda z}{\mu})|.
\end{equation}
\end{lemma}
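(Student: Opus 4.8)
The plan is to derive Lemma~\ref{aaR} from Lemma~\ref{le:ks} by a suitable specialization of the parameters and an application of the spectral-parameter dressing~(\ref{hRh}). First I would rewrite the action of $\s(z)$ on $F^{\otimes 3}$ using~(\ref{hRh}), namely $\s(z)_{123}=z^{{\bf h}_1}\R_{213}z^{-{\bf h}_1}$, so that statements about $\hat{\s}^{ab}_{ij}(z)$ become statements about $\R_{213}$ conjugated by powers of $z^{{\bf h}}$. Concretely, using the definition $\s(z)|i,j,k\rangle=\sum_{a,b}|a\rangle\otimes|b\rangle\otimes\hat{\s}^{ab}_{ij}(z)|k\rangle$, the identity~(\ref{aaR>}) is equivalent to the assertion that the vector $|\chi(\lambda)\rangle\otimes|\chi(\mu)\rangle\otimes|\chi(\tfrac{\lambda}{\mu z})\rangle$ is fixed, in the third slot, under $\s(z)$ acting on the first two slots paired against $\chi_i(\lambda)\chi_j(\mu)$. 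Because $\s(z)_{123}=z^{{\bf h}_1}\R_{213}z^{-{\bf h}_1}$ and $\R_{213}$ is just $\R_{123}$ with the first two factors swapped, the claim reduces to Lemma~\ref{le:ks} with the roles of the first two tensor factors interchanged and with the eigenvector arguments adjusted by the conjugation $z^{\pm{\bf h}_1}$.

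The second step is the bookkeeping of arguments. In Lemma~\ref{le:ks} the fixed vector is $|\chi(x)\rangle\otimes|\chi(xy)\rangle\otimes|\chi(y)\rangle$; after the swap $1\leftrightarrow2$ this becomes $|\chi(xy)\rangle\otimes|\chi(x)\rangle\otimes|\chi(y)\rangle$, still an eigenvector of $\R_{213}$. Conjugating the first slot by $z^{{\bf h}_1}$ multiplies $\chi_m(xy)$ by $z^m$, i.e.\ sends $|\chi(xy)\rangle\mapsto|\chi(xyz)\rangle$; since $z^{-{\bf h}_1}$ acts on the left one must correspondingly adjust. I would then match the three free parameters: setting $x y z$, $x$, $y$ equal to $\lambda$, $\mu$, $\tfrac{\lambda}{\mu z}$ respectively and checking the single constraint coming from the structure $x,xy,y$ of Lemma~\ref{le:ks}. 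The constraint $\text{(first arg)}\cdot\text{(third arg)}=\text{(second arg)}$ up to the $z$-shift is exactly what forces the third argument to be $\tfrac{\lambda}{\mu z}$, which is the content of the lemma; this is a short computation verifying $\mu\cdot\tfrac{\lambda}{\mu z}\cdot z=\lambda$ after restoring the conjugation factor. The dual statement~(\ref{aa<R}) follows the same way from the second equation of Lemma~\ref{le:ks}, with the weights $\chi'_m=\tfrac{(q^2)_m}{(q)_m}$ appearing because the left action on $F^\ast$ carries the extra factor $\tfrac{(q^2)_c}{(q^2)_k}$ visible in~(\ref{csz}); the $z$-direction of the conjugation is reversed (one gets $\tfrac{\lambda z}{\mu}$ rather than $\tfrac{\lambda}{\mu z}$) because on the dual space $z^{{\bf h}}$ acts from the other side.

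An alternative, more hands-on route is a direct verification: expand the left side of~(\ref{aaR>}) using the explicit formula~(\ref{rsdef}) for $\hat{\s}^{ba}_{ji}(z^{-1})=\hat{\R}^{ab}_{ij}(z)$, or rather~(\ref{csz}) together with~(\ref{def:R}), substitute $\chi_m=\tfrac{1}{(q)_m}$, and collapse the resulting multiple $q$-binomial sum using a $q$-analogue of the Chu--Vandermonde identity. This is feasible but the nested sums over $\lambda,\mu$ in~(\ref{def:R}) together with the sum over $i,j$ make it unpleasant, so I would only fall back on it if the conjugation argument hit a snag.

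The main obstacle I anticipate is purely the argument-tracking in the conjugation step: making sure that the three spectral parameters $\lambda,\mu,\tfrac{\lambda}{\mu z}$ in the lemma correspond correctly to $x,xy,y$ after the swap $1\leftrightarrow 2$ \emph{and} the one-sided conjugation by $z^{{\bf h}_1}$, and in particular getting the $z$ versus $z^{-1}$ placement right in the two identities. The ${\bf h}$-conjugation changes $|\chi(t)\rangle$ to $|\chi(tz)\rangle$ but leaves the pairing-weight normalization untouched, whereas passing to $F^\ast$ introduces the $\chi\to\chi'$ replacement; keeping these two independent modifications from getting entangled is the only real care required. Once the dictionary is fixed, both~(\ref{aaR>}) and~(\ref{aa<R}) are immediate corollaries of Lemma~\ref{le:ks}.
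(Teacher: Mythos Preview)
Your approach is correct and, like the paper, derives everything from Lemma~\ref{le:ks}; the technical route differs slightly. The paper first uses the homogeneity $\chi_i(\lambda)\chi_j(\mu)\hat{\s}^{ab}_{ij}(z)=\lambda^a\mu^b\chi_i\chi_j\hat{\s}^{ab}_{ij}(\tfrac{\mu z}{\lambda})$ (from the conservation law $a+b=i+j$) to reduce to $\lambda=\mu=1$, then invokes Lemma~\ref{le:ks} only at $x=y=1$ as a scalar identity $\chi_a\chi_b\chi_c=\sum_{i,j,k}\chi_i\chi_j\chi_k\R^{abc}_{ijk}$, and finally resums $\sum_c z^{-c}|c\rangle$ using the weight constraint to recover $|\chi(z^{-1})\rangle$ and the operator $\hat{\s}$. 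You instead keep the general $\lambda,\mu$ and use the dressing formula~(\ref{hRh}) directly, so that Lemma~\ref{le:ks} at general $x,y$ plus the conjugation $z^{{\bf h}_1}|\chi(t)\rangle=|\chi(tz)\rangle$ and the swap $1\leftrightarrow 2$ give the result after matching $(\lambda,\mu,\tfrac{\lambda}{\mu z})=(xyz,x,y)$. Your route is a bit more structural and avoids unpacking the matrix elements; the paper's route is more elementary in that it never needs~(\ref{hRh}) and shows transparently why only the ratio $\tfrac{\lambda}{\mu z}$ survives. Either is perfectly adequate here.
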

\begin{proof}
From $\chi_i(\lambda)\chi_j(\mu)\hat{\s}^{ab}_{ij}(z)
=\lambda^a\mu^b\chi_i\chi_j\hat{\s}^{ab}_{ij}(\frac{\mu z}{\lambda})$ 
and $\chi'_a(\lambda)\chi'_b(\mu)\hat{\s}^{ab}_{ij}(z)
=\lambda^i\mu^j\chi'_a\chi'_b\hat{\s}^{ab}_{ij}(\frac{\lambda z}{\mu})$, 
the proof reduces to the case $\lambda=\mu=1$. 
By the first relation in Lemma \ref{le:ks} with $x=y=1$ we know
$\chi_a\chi_b\chi_c=\sum_{i,j,k}\chi_i\chi_j\chi_k \R^{abc}_{ijk}$.
Applying $\sum_c z^{-c}|c\rangle$ to this 
and noting that $\R^{abc}_{ijk}=0$ unless 
$(a+b,b+c)=(i+j,j+k)$,  we get
$\chi_a\chi_b|\chi(z^{-1})\rangle=
\sum_{i,j}\chi_i\chi_j\sum_k\chi_kz^{-k}
z^{i-a}\sum_c\R^{abc}_{ijk}|c\rangle
=\sum_{i,j}\chi_i\chi_j\sum_k\chi_kz^{-k}\hat{\s}^{ba}_{ji}(z)|k\rangle
=\sum_{i,j}\chi_i\chi_j\hat{\s}^{ba}_{ji}(z)|\chi(z^{-1})\rangle$.
The relation (\ref{aa<R}) can be shown similarly.
\end{proof}

\section{Layer to Layer transfer matrices}\label{sec:lltm}

For any $N \in \Z_{\ge 1}$ and 
array ${\bf i}=(i_1,\ldots, i_N) \in (\Z_{\ge 0})^N$,
we will use the notation
\begin{equation}\label{chic}
\begin{split}
\chi_{\bf i}&= \chi_{i_1}\cdots \chi_{i_N}, \qquad\qquad\quad\;\;\;
\chi'_{\bf i}= \chi'_{i_1}\cdots \chi'_{i_N},\\
\chi_{\bf i}(z) &= \chi_{i_1}(z)\cdots \chi_{i_N}(z), \qquad
\chi'_{\bf i}(z) = \chi'_{i_1}(z)\cdots \chi'_{i_N}(z),
\end{split}
\end{equation}
where the right hand sides are defined in (\ref{nska}).
Recall also that $|{\bf i}| = i_1+\cdots + i_N$
as already used in (\ref{hrk2}).

\subsection{\mathversion{bold}Definition of layer to layer transfer matrices}
Fix positive integers $n$ and $m$. 
Associated with the arrays 
${\bf b}=(b_1,{\ldots}, b_n), 
{\bf j}=(j_1,{\ldots}, j_n)  \in (\Z_{\ge 0})^n$
and 
${\bf a}=(a_1,{\ldots},a_m), 
{\bf i}=(i_1,{\ldots},i_m) \in (\Z_{\ge 0})^m$,
we introduce $T(z)^{{\bf a},{\bf b}}_{{\bf i}, {\bf j}} \in 
\mathscr{A}_q^{\otimes mn}$ by
\begin{equation}
\begin{picture}(300,88)(-95,-17)
\put(-84,26){$T(z)^{{\bf a},{\bf b}}_{{\bf i},{\bf j}}\,= 
{\displaystyle \sum}$}
\put(-8,48){\vector(1,0){56}}
\put(-8,40){\vector(1,0){56}}
\put(-8,32){\vector(1,0){56}}
\put(-8,8){\vector(1,0){56}}
\put(-8,0){\vector(1,0){56}}

\put(40,-5){\vector(0,1){60}}
\put(32,-5){\vector(0,1){60}}
\put(8,-5){\vector(0,1){60}}
\put(0,-5){\vector(0,1){60}}

\put(-5,60){${\scriptstyle b_1}$}
\put(5,60){${\scriptstyle b_2}$}
\put(20,60){$. . .$}
\put(38,60){${\scriptstyle b_n}$}

\put(-5,-12){${\scriptstyle j_1}$}
\put(5,-12){${\scriptstyle j_2}$}
\put(20,-12){$. . .$}
\put(38,-12){${\scriptstyle j_n}$}

\put(50,48){${\scriptstyle a_1}$}
\put(50,38){${\scriptstyle a_2}$}
\put(50,14){$\vdots$}
\put(50,-2){${\scriptstyle a_m}$}

\put(-17,48){${\scriptstyle i_1}$}
\put(-17,38){${\scriptstyle i_2}$}
\put(-17,14){$\vdots$}
\put(-17,-2){${\scriptstyle i_m}$}

\put(125,28){
\put(-10,0){\vector(1,0){20}}
\put(0,-10){\vector(0,1){20}}
\put(-17,-3.5){$i$}\put(12.5,-3.5){$a$}
\put(-2.4,13){$b$}\put(-2.3,-19){$j$}
\put(30,-4){$=\, \hat{\R}^{ab}_{ij}(z)$\, in (\ref{rsdef}).}
}
\end{picture}
\label{asks}
\end{equation}
The sum here is taken 
with respect to all the internal edges over $\Z_{\ge 0}$.
It is the ``partition function" of a 2D 
vertex model with fixed boundary condition.
The ``Boltzmann weight" of a configuration is the {\em tensor product}  of 
$\hat{\R}^{ab}_{ij}(z) \in \mathscr{A}_q$ (\ref{rsdef}) attached to each vertex\footnote{
It is {\em not}  (\ref{vt}) although the same diagram is used for simplicity.
They are identified at $q=0$ in Lemma \ref{le:ysgk}.}.
It is naturally regarded as an element in 
$\mathrm{End}(F^{\otimes mn})$ by (\ref{irte}).
By using $T(z)^{{\bf a}, {\bf b}}_{{\bf i}, {\bf j}}$, 
we define the main object of our study:
\begin{equation} \label{bT}
\mathbb{T}(z)^{\bf b}_{\bf i}
= \chi'_{\bf b}\,\chi_{\bf i}
\sum_{{\bf a}, \,{\bf j}} \chi'_{\bf a}\,\chi_{\bf j}\,
T(z)^{{\bf a}, {\bf b}}_{{\bf i}, {\bf j}},
\end{equation}
where the sum ranges over all 
${\bf a} = (a_1,\ldots, a_m) \in (\Z_{\ge 0})^m$ and 
${\bf j} = (j_1, \ldots, j_n) \in (\Z_{\ge 0})^n$. 
This is the partition function of a 2D 
vertex model with the NW-fixed and the SE-free boundary condition
with the extra  Boltzmann weight 
$\chi'_{\bf b} \chi_{\bf i} \chi'_{\bf a} \chi_{\bf j}$ attached to
the boundary edges.
The $\mathbb{T}(z)^{\bf b}_{\bf i} \in \mathscr{A}_q^{\otimes mn}$ 
is also regarded as an element in 
$\mathrm{End}(F^{\otimes mn})$.

The diagram for $\hat{\R}^{ab}_{ij}(z)$ in (\ref{asks}) is to be understood as
the 2D projection of the actual 3D vertex in (\ref{3Dpic}).
When emphasizing this aspect,  we employ more 3D looking diagrams.
For instance (\ref{bT}) with $(m,n)=(3,4)$ is depicted as
the layer in the cubic lattice as
\begin{equation}
\begin{picture}(200,80)(-98,-10)
\setlength{\unitlength}{0.55mm}

\put(-80,22){$\mathbb{T}(z)^{\bf b}_{\bf i}
= \chi'_{\bf b}\,\chi_{\bf i}
\displaystyle \sum_{{\bf a}, \,{\bf j}} \chi'_{\bf a}\,\chi_{\bf j}$}
\put(-11,29){$i_1$}
\put(-11,19){$i_2$}
\put(-11,9){$i_3$}

\put(46,24){$a_1$}
\put(46,14){$a_2$}
\put(46,4){$a_3$}

\put(3,39){$b_1$}\put(13,38){$b_2$}\put(23,37){$b_3$}\put(33,36){$b_4$}

\put(0,-43){
\put(3,39){$j_1$}\put(13,38){$j_2$}\put(23,37){$j_3$}\put(33,36){$j_4$}
}

\put(-4,0){
\put(0,30){\rotatebox{-5}{\vector(1,0){47}}}
\put(0,20){\rotatebox{-5}{\vector(1,0){47}}}
\put(0,10){\rotatebox{-5}{\vector(1,0){47}}}

\multiput(-1,2)(10,-1){4}{
\put(10,0){\vector(0,1){35}}}

\multiput(10,10)(10,-1){4}{
\multiput(0,0)(0,10){3}{
\put(2.4,1.6){\color{blue}\vector(-3,-2){9}}
}}
}
\end{picture}
\label{gkkn}
\end{equation}
All the vertices here are penetrated from back to face 
by the blue arrows carrying the independent copies of the 
Fock space as (\ref{3Dpic}). 
The sum is taken not only for ${\bf a}=(a_1,a_2,a_3)$ and 
${\bf j}=(j_1,j_2,j_3, j_4)$ but also for 
all the internal edges.
In this way, 
one may either view the $\mathbb{T}(z)^{\bf b}_{\bf i}$ as a
partition function of 
the $\mathscr{A}_q$-valued 2D vertex model,
or as a layer to layer transfer matrix of the 3D lattice model as in (\ref{gkkn}).
In the latter picture, the Boltzmann weight assigned with the 
vertex  (\ref{3Dpic})  is $z^{j-b}\R^{abc}_{ijk}$ (\ref{def:R}) when 
the blue arrow goes from $k$ to $c$.
By the definition the $z$-dependence of each summand is a 
simple power $z^{|{\bf j}|-|{\bf b}|}$.

\begin{example}\label{ex:S^00_0}
For $(m,n)=(1,2)$ the definition (\ref{bT}) reads
\begin{equation*}
\!\!\!\!\!\!\!\!\!\!\!\!\!\!
\!\!\!\!\!\!\!\!\!\!\!\!\!\!\!\!\!\!
\begin{picture}(140,35)(0,0)
\thinlines
\mbox{
\begin{picture}(70,18)(-6,12)
\put(30,15){
\put(-87,12){${\mathbb T}(z)^{b_1,b_2}_{i}
=\chi_i\chi'_{b_1,b_2}{\displaystyle \sum_{j_1,j_2}}$}

\put(30,0){
\put(6,14){\vector(0,1){10}}
%2
\put(18,14){\vector(0,1){10}}
%3
\put(-2,14){\line(1,0){8}}
%4
\put(6,14){\line(1,0){12}}
%5
\put(18,14){\vector(1,0){11}}
%6
\put(6,7){\line(0,1){7}}
%7
\put(18,7){\line(0,1){7}}

\put(4,27){\scriptsize{$b_1$}}
\put(15,27){\scriptsize{$b_2$}}
\put(-7,12){\scriptsize{$i$}}
\put(4,0){\scriptsize{$j_1$}}
\put(15,0){\scriptsize{$j_2$}}
\put(37,12){$\chi'_{i+j_1+j_2-b_1-b_2}\chi_{j_1,j_2}
$}
}
}
\end{picture}
}
\end{picture}
\end{equation*}
The sum 
is taken over $\{j_1,j_2|j_1 \ge (b_1-i)_+, \ j_2 \ge (b_1+b_2-i-j_1)_+\}$.
For example one has
\begin{align*}
&{\mathbb T}(z)^{0,0}_0=\sum_{j_1, j_2 \ge 0}
z^{j_1+j_2}\chi'_{j_1+j_2}\chi_{j_1,j_2}({\bf a}^+)^{j_2}{\bf k}^{j_1}{\otimes}({\bf a}^+)^{j_1}, \\
&{\mathbb T}(z)^{0,0}_1=\frac{1}{1-q}\sum_{j_1, j_2 \ge 0}
z^{j_1+j_2}\chi'_{j_1+j_2+1}\chi_{j_1,j_2}({\bf a}^+)^{j_2}{\bf k}^{j_1+1}{\otimes}({\bf a}^+)^{j_1}{\bf k}, \\
&{\mathbb T}(z)^{1,0}_0=
-z^{-1}(1+q)q\sum_{j_1 \ge 1, j_2 \ge 0}
z^{j_1+j_2}\frac{1-q^{2j_1}}{1-q^2}\chi'_{j_1+j_2-1}\chi_{i_1,j_2}({\bf a}^+)^{j_2}{\bf k}^{j_1-1}{\otimes}({\bf a}^+)^{j_1-1}{\bf k}.
\end{align*}
\end{example} 

\subsection{Bilinear relations}
\begin{proposition}\label{pr:stt}
For any $x,x', y, y'$ and 
arrays ${\bf a}, {\bf a}', {\bf i}, {\bf i}' \in (\Z_{\ge 0})^m$
and 
${\bf b}, {\bf b}', {\bf j}, {\bf j}' \in (\Z_{\ge 0})^n$,
the following equality holds as an operator on 
$F \otimes F^{\otimes mn}$:
\begin{align}\label{eq:ltol}
&\sum_{{\bf a}'',{\bf a}''',{\bf b}'',{\bf b}'''}
\Bigl(\hat{\s}^{a_ma'_m}_{a''_ma'''_m}(\textstyle \frac{x}{x'}){\cdots}\hat{\s}^{a_1a'_1}_{a''_1a'''_1}(\textstyle \frac{x}{x'})\Bigr)
\Bigl(\hat{\s}^{b_nb'_n}_{b''_nb'''_n}(\textstyle \frac{y}{y'}){\cdots}
\hat{\s}^{b_1b'_1}_{b''_1b'''_1}(\textstyle \frac{y}{y'})\Bigr)
T(\textstyle \frac{x}{y})^{{\bf a}''{\bf b}''}_{{\bf i} \ {\bf j}}T(\textstyle \frac{x'}{y'})^{{\bf a}'''{\bf b}'''}_{{\bf i}' \ {\bf j}'}  
\nonumber \\
& =  \sum_{{\bf i}'',{\bf i}''',{\bf j}'',{\bf j}'''}
T(\textstyle \frac{x'}{y'})^{{\bf a}'{\bf b}'}_{{\bf i}'''{\bf j}'''}
T(\textstyle \frac{x}{y})^{{\bf a} \ {\bf b}}_{{\bf i}''{\bf j}''}
\Bigl(\hat{\s}^{j''_nj'''_n}_{j_nj'_n}(\textstyle \frac{y}{y'}){\cdots}
\hat{\s}^{j''_1j'''_1}_{j_1j'_1}(\textstyle \frac{y}{y'})\Bigr)
\Bigl(\hat{\s}^{i''_mi'''_m}_{i_mi'_m}(\textstyle \frac{x}{x'}){\cdots}{\hat \s}^{i''_1i'''_1}_{i_1i'_1}(\textstyle \frac{x}{x'})\Bigr),
\end{align}
where the sums are taken over the arrays 
${\bf a}'', {\bf a}''', {\bf i}'', {\bf i}''' \in (\Z_{\ge 0})^m$,
${\bf b}'', {\bf b}''', {\bf j}'', {\bf j}''' \in (\Z_{\ge 0})^n$.
Each array is specified by the components as 
${\bf a}''=(a_1'',{\ldots}, a_m'') \in (\Z_{\ge 0})^m$, etc.
\end{proposition}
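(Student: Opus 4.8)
The plan is to derive the bilinear relation (\ref{eq:ltol}) as a consequence of the tetrahedron equation (\ref{TEz}) applied repeatedly on an $m\times n$ array of vertices, in the spirit of the standard "train argument" that produces commuting transfer matrices from the Yang--Baxter equation. First I would set up the geometric picture: the product $T(\frac{x}{y})^{{\bf a}'' {\bf b}''}_{{\bf i}\ {\bf j}} \, T(\frac{x'}{y'})^{{\bf a}''' {\bf b}'''}_{{\bf i}'\ {\bf j}'}$ is the partition function of two stacked $m\times n$ layers of the 3D lattice, with the blue arrows (copies of $F$ carrying the $\mathscr A_q$-action) threading through each vertex from back to front; the left-multiplying factors $\hat\s^{a_m a'_m}_{a''_m a'''_m}(\frac{x}{x'})\cdots$ and $\hat\s^{b_n b'_n}_{b''_n b'''_n}(\frac{y}{y'})\cdots$ are an extra pair of $\s$-vertices glued along the two pairs of free boundary edges (the ${\bf a}$-side and the ${\bf b}$-side). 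The claim is that this whole configuration can be "combed through" from one side to the other, so that the two $\s$-blocks migrate from the incoming boundary to the outgoing boundary, with the two $T$-layers swapping their order and their spectral parameters $\frac{x}{y}\leftrightarrow\frac{x'}{y'}$ exchanging roles accordingly.

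The key step is to identify, for each individual vertex $(r,s)$ of the lattice, a local six-vertex cluster built from the two blue $\R$-operators of the two layers together with the green $\s$-operators associated to the $r$-th row and the $s$-th column, and to recognize this cluster as exactly one side of (\ref{TEz}) (after matching the spectral parameters: with the blue arrow of layer one carrying $\frac{x}{y}$, that of layer two carrying $\frac{x'}{y'}$, the row $\s$ carrying $\frac{x}{x'}$ and the column $\s$ carrying $\frac{y}{y'}$, the four ratios are consistently $z_{13},z_{24},z_{12},z_{34}$ for suitable $z_1,\dots,z_4$). Applying (\ref{TEz}) at that vertex moves the two $\s$-lines from the "in" side of the two blue lines to the "out" side. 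I would then carry out these local moves in a definite order — say sweeping the columns $s=1,\dots,n$ and within each column the rows $r=1,\dots,m$ — checking at each stage that the $\s$-operators produced by a move are precisely the ones required as input for the next move (this is where the chosen orderings $\hat\s^{a_m a'_m}_{\cdots}\cdots\hat\s^{a_1 a'_1}_{\cdots}$ and $\hat\s^{b_n b'_n}_{\cdots}\cdots\hat\s^{b_1 b'_1}_{\cdots}$, and the reversed orderings on the right-hand side, become forced). After all $mn$ applications the $\s$-blocks have emigrated to the opposite boundary, the blue layers have traded places, and summing over all internal edges yields (\ref{eq:ltol}) after relabeling the summation indices ${\bf a}'',{\bf a}''',{\bf b}'',{\bf b}'''$ on the left versus ${\bf i}'',{\bf i}''',{\bf j}'',{\bf j}'''$ on the right.

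A cleaner route that I would actually prefer is to phrase everything operator-theoretically: use (\ref{arkr})--(\ref{csz}) to write both $T$ and the boundary $\s$-factors as matrix elements of the 3D $R$- and $\s$-operators acting on auxiliary copies of $F$, so that (\ref{eq:ltol}) becomes the equality of two operators on $F\otimes F^{\otimes mn}$ obtained by composing $\R(\cdot)$'s and $\s(\cdot)$'s along the two layers; then the single auxiliary blue line (the extra $F$ in $F\otimes F^{\otimes mn}$) is pushed through the $mn$-vertex stack using (\ref{TEz}) $mn$ times, exactly as a spectral-parameter curve is dragged across a lattice in the Yang--Baxter setting. The main obstacle I anticipate is purely bookkeeping: getting the indices, the orders of the non-commuting $\s$-products, and the four spectral-parameter ratios to line up correctly at every one of the $mn$ local moves, and ensuring that the auxiliary-space contractions (the $|\chi\rangle$-type vectors are \emph{not} yet present here, but the free SE edges are) are respected throughout. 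There is no genuinely hard analytic point — the content is entirely in the tetrahedron equation — so once the combinatorial orchestration of the sweep is pinned down, the proof is complete.
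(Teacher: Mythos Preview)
Your approach is correct and is essentially the paper's own proof: the single auxiliary green arrow (the extra $F$ in $F\otimes F^{\otimes mn}$, on which \emph{all} the $\hat\s$-operators act) is pushed through the $m\times n$ stack of blue arrows by $mn$ successive applications of (\ref{TEz}), starting from the top-right corner, with the spectral-parameter assignment exactly as you describe. One small slip in your wording: you refer to ``two $\s$-lines'', but there is only one green line---the two $\s$-operators appearing in each local tetrahedron move share it, just as they share the index $6$ in (\ref{TEz})---so in your preferred operator-theoretic phrasing it is this single line (not a blue one) that is dragged across the lattice.
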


\begin{proof}
The relation (\ref{eq:ltol}) is depicted as follows:
\begin{equation} \label{fig}
\begin{picture}(400,140)(-43,-10)
\setlength{\unitlength}{0.6mm}
%LHS
%L-shape
\thicklines
{\color{green}\drawline(0,60)(60,60)
\qbezier(60,60)(63,60)(63,57)\put(63,57){\vector(0,-1){60}}}
\thinlines
\put(-25,32){$\displaystyle{
\sum_{\substack{{\bf a}'',{\bf a}''',\\{\bf b}'',{\bf b}'''}}}$}
%upper horizontal X shape
\put(-10,25){
\qbezier(67,18)(68,18)(69,18)\put(69,18){\vector(1,-1){10}}
\put(14,18){\line(1,0){5}}\put(21,18){\line(1,0){26}}\put(49,18){\line(1,0){18}}
\put(8,17){$i'_1$}\put(64,20.9){$a'''_1$}\put(84.5,16){$a_1$}
%upper
\put(10,10){\line(1,0){50}}
\qbezier(60,10)(62,10.2)(63,10.3)
\put(63,10.3){\vector(3,1){20}}
\put(4,8.2){$i_1$}\put(63,6){$a''_1$}\put(80,5){$a'_1$}
%lower
\put(12,-3){$\vdots$}
}

%lower horizontal X shape
\put(-10,-3){
\qbezier(67,18)(68,18)(69,18)\put(69,18){\vector(1,-1){10}}
\put(14,18){\line(1,0){5}}\put(21,18){\line(1,0){26}}\put(49,18){\line(1,0){18}}
\put(6,17){$i'_m$}\put(64,20.9){$a'''_m$}\put(84.5,16){$a_m$}
%upper
\put(10,10){\line(1,0){50}}
\qbezier(60,10)(62,10.2)(63,10.3)
\put(63,10.3){\vector(3,1){20}}
\put(2.5,8){$i_m$}\put(63,6){$a''_m$}\put(80,5){$a'_m$}
%lower
}

%left vertical X shape
\put(-10,-7){
\put(30.5,63){\vector(-1,1){9.5}}\qbezier(30.5,63)(30.9,62.7)(31,62)
\put(31,10){\line(0,1){3}}\put(31,15){\line(0,1){26}}\put(31,43){\line(0,1){19}}
\put(15,74){$b'_1$}\put(30,78){$b_1$}
\put(33,61){$b'''_1$}\put(30,4){$j'_1$}
%right
\put(20.4,57){\vector(2,3){12}}\qbezier(20.4,57)(20.3,56.8)(20,56)
\put(20,6){\line(0,1){50}}
\put(14,57){$b''_1$}\put(18,0){$j_1$}
%left
\put(0.2,7){\rotatebox{36}{\put(41,25){\color{blue}\vector(-1,0){22}}}}
\put(0.2,-21){\rotatebox{36}{\put(41,25){\color{blue}\vector(-1,0){22}}}}
\put(35,53){$\cdots$}
}

%right vertical X shape
\put(18,-7){
\put(30.5,63){\vector(-1,1){9.5}}\qbezier(30.5,63)(30.9,62.7)(31,62)
\put(31,10){\line(0,1){3}}\put(31,15){\line(0,1){26}}\put(31,43){\line(0,1){19}}
\put(15,74){$b'_n$}\put(30,78){$b_n$}
\put(33,61){$b'''_n$}\put(30,4){$j'_n$}
%right
\put(20.4,57){\vector(2,3){12}}\qbezier(20.4,57)(20.3,56.8)(20,56)
\put(20,6){\line(0,1){50}}
\put(14,57){$b''_n$}\put(18,0){$j_n$}
%left
\put(0.2,7){\rotatebox{36}{\put(41,25){\color{blue}\vector(-1,0){22}}}}
\put(0.2,-21){\rotatebox{36}{\put(41,25){\color{blue}\vector(-1,0){22}}}}
}

%end of LHS
\put(-10,0){
\put(93,32){$=
\displaystyle{\sum_{\substack{{\bf i}'',{\bf i}''',\\{\bf j}'',{\bf j}'''}}}$}
\put(-10,0){
%RHS
\thicklines
\put(139,5){
{\color{green}\drawline(0,60)(0,3)
\qbezier(0,3)(0,0)(3,0)\put(3,0){\vector(1,0){60}}}
\thinlines

%upper horizontal X shape
\put(-1,32){
\drawline(-8,12)(10,18)\qbezier(10,18)(11,18)(12.6,18)
\put(15,18){\line(1,0){26}}\put(43,18){\vector(1,0){24}}
\put(4,20){$i''_1$}\put(69,17){$a_1$}
%upper
\drawline(-6,20)(6,12)\qbezier(6,12)(8,10)(10,10)
\put(10,10){\vector(1,0){50}}
\put(2.2,5){$i'''_1$}\put(62,9){$a'_1$}
%lower
\put(-11.5,21){$i'_1$}\put(-13.5,10){$i_1$}
\put(-5,0){$\vdots$}
}

%lower horizontal X shape
\put(-1,4){
\drawline(-8,12)(10,18)\qbezier(10,18)(11,18)(12.6,18)
\put(15,18){\line(1,0){26}}\put(43,18){\vector(1,0){24}}
\put(4,20){$i''_m$}\put(69,17){$a_m$}
%upper
\drawline(-6,20)(6,12)\qbezier(6,12)(8,10)(10,10)
\put(10,10){\vector(1,0){50}}
\put(1.8,5.5){$i'''_m$}\put(62,9){$a'_m$}
%lower
\put(-13.5,20){$i'_m$}\put(-15,10){$i_m$}
}

%left vertical X shape
\put(-7,0){
\drawline(21,-10)(31,8)  \qbezier(31,8)(31,9) (31,10)
\put(31,10){\line(0,1){3}}\put(31,15){\line(0,1){26}}\put(31,43){\vector(0,1){19}}\put(29,64){$b_1$}\put(32.2,7){$j''_1$}\put(32.3,-7.5){$j'_1$}
%right
\drawline(32,-5)(21,4)  \qbezier(21,4)(21,4) (20,6)
\put(20,6){\vector(0,1){50}}
\put(18,58){$b'_1$}\put(12,2.7){$j'''_1$}\put(16.3,-12){$j_1$}
%left
\put(0.2,7){\rotatebox{36}{\put(41,25){\color{blue}\vector(-1,0){22}}}}
\put(0.2,-21){\rotatebox{36}{\put(41,25){\color{blue}\vector(-1,0){22}}}}
\put(36,55){$\cdots$}
}

%right vertical X shape
\put(21,0){
\drawline(21,-10)(31,8)  \qbezier(31,8)(31,9) (31,10)
\put(31,10){\line(0,1){3}}\put(31,15){\line(0,1){26}}\put(31,43){\vector(0,1){19}}
\put(29,64){$b_n$}\put(32.2,7){$j''_n$}\put(32.3,-7.5){$j'_n$}
%right
\drawline(32,-5)(21,4)  \qbezier(21,4)(21,4) (20,6)
\put(20,6){\vector(0,1){50}}
\put(18,58){$b'_n$}\put(12,3){$j'''_n$}\put(16.1,-12){$j_n$}
%left
\put(0.2,7){\rotatebox{36}{\put(41,25){\color{blue}\vector(-1,0){22}}}}
\put(0.2,-21){\rotatebox{36}{\put(41,25){\color{blue}\vector(-1,0){22}}}}
}

}
%RHS
}
}
\end{picture}
\thinlines
\end{equation}
The $T(z)^{{\bf a},{\bf b}}_{{\bf i}, {\bf j}}$'s act on the
$F^{\otimes mn}$ on the blue arrows 
and the $\hat{\s}^{ab}_{ij}(z)$'s 
do on the single $F$ on the green arrow.
On the left hand side, starting from the top right corner, 
one can apply the tetrahedron equation (\ref{kiip}) successively
to push the green arrow down through all the blue arrows.
It converts the left hand side into the right hand side.
To check the fitness of the spectral parameters with (\ref{TEz}),
assign the four groups of black arrows labeled by the external edges
${\bf a}, {\bf a}', {\bf b}$ and ${\bf b}'$ with  
$x,x',y$ and $y'$, respectively.
\end{proof}

Note that $\hat{\s}^{ab}_{ij}(z)=0$ unless $a+b=i+j$ due to (\ref{rsdef}).
Therefore the sums in (\ref{eq:ltol}) are subject to 
${\bf a}''+{\bf a}''' = {\bf a}+{\bf a}', 
{\bf b}''+{\bf b}''' = {\bf b}+{\bf b}'$ on the left hand side and
${\bf i}''+{\bf i}''' = {\bf i}+{\bf i}', 
{\bf j}''+{\bf j}''' = {\bf j}+{\bf j}'$ on the right hand side,
hence are finite.
Now we are ready to prove the main result in this section.

\begin{theorem}[Bilinear relation of layer to layer transfer matrix]\label{Th:main}
For any ${\bf s} \in (\Z_{\ge 0})^n$ and ${\bf r} \in (\Z_{\ge 0})^m$, 
the following relation holds
as an operator on $F^{\otimes mn}$:
\begin{equation*}
\sum_{\substack{{\bf b},{\bf b}',{\bf i},{\bf i}'\\{\bf b}+{\bf b}'={\bf s},\,{\bf i}+{\bf i}'={\bf r}}}
x^{|{\bf b}|+|{\bf i}|}y^{|{\bf b}'|+|{\bf i}'|}\,
{\mathbb T}(x)^{\bf b}_{\bf i}\,{\mathbb T}(y)^{{\bf b}'}_{{\bf i}'}=(x \leftrightarrow y),
\end{equation*}
where the sum extends over
${\bf b}, {\bf b}' \in (\Z_{\ge 0})^n$ and 
${\bf i}, {\bf i}' \in (\Z_{\ge 0})^m$ under the specified conditions.
\end{theorem}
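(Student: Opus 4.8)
The plan is to derive Theorem \ref{Th:main} from Proposition \ref{pr:stt} by a suitable specialization of the free external spectral parameters $x,x',y,y'$ and by contracting all the ``boundary'' $\hat{\s}$ factors against the eigenvectors $|\chi(z)\rangle$ and $\langle\chi(z)|$ supplied by Lemma \ref{aaR}. Concretely, I would set $x'=y$ and $y'=x$ in (\ref{eq:ltol}) so that the two internal transfer matrices become $T(x/y)$ and $T(y/x)$, matching the spectral arguments that will appear inside $\mathbb{T}(x)$ and $\mathbb{T}(y)$; the boundary factors then all carry argument $x/y$ (for the ${\bf a}$-type and ${\bf j}$-type strings) or $y/x$ (for the ${\bf b}$-type and ${\bf i}$-type strings). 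The goal is to recognize the left-hand side of (\ref{eq:ltol}), after weighting by appropriate $\chi,\chi'$ factors and summing over the now-free external edge labels, as $\mathbb{T}(x)^{\bf b}_{\bf i}\,\mathbb{T}(y)^{{\bf b}'}_{{\bf i}'}$ summed against the eigenvector relations, and likewise for the right-hand side with $x$ and $y$ swapped.

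The key steps, in order: (i) Multiply (\ref{eq:ltol}) (with $x'=y$, $y'=x$) by the product of weights $\chi'_{a_m}(\cdot)\cdots\chi_{j_1}(\cdot)$ of the form occurring in (\ref{bT}) and sum over the external arrays ${\bf a},{\bf a}',{\bf b},{\bf b}',{\bf i},{\bf j},{\bf i}',{\bf j}'$ that are being contracted. (ii) On the left side, each string $\hat{\s}^{a_k a'_k}_{a''_k a'''_k}(x/y)\cdots$ applied against the incoming eigenvector factor collapses via (\ref{aaR>}): the sums over $a''_k,a'''_k$ (the inputs to the two $T$'s) get replaced by $a_k,a'_k$ times the eigenvector coefficients, so that the doubled variables fuse back into the single $\mathbb{T}(x)$ and $\mathbb{T}(y)$; simultaneously the Fock-space operator on the green arrow acts as the identity on the eigenvector. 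The same mechanism, using (\ref{aa<R}) for the $\langle\chi|$ contraction, handles the ${\bf j}$- and ${\bf i}$-strings. (iii) Keep careful track of the scalar powers: each $\hat{\s}^{ab}_{ij}(z)$ contributes $z^{j-b}$, and the eigenvector coefficients contribute monomials in $x/y$ by (\ref{nska}); these should combine with the intrinsic $z^{|{\bf j}|-|{\bf b}|}$ of each $T$ to produce exactly the prefactor $x^{|{\bf b}|+|{\bf i}|}y^{|{\bf b}'|+|{\bf i}'|}$ in the statement. (iv) Conclude that the contracted left-hand side equals $\sum x^{|{\bf b}|+|{\bf i}|}y^{|{\bf b}'|+|{\bf i}'|}\mathbb{T}(x)^{\bf b}_{\bf i}\mathbb{T}(y)^{{\bf b}'}_{{\bf i}'}$ with ${\bf b}+{\bf b}'={\bf s}$, ${\bf i}+{\bf i}'={\bf r}$ (the conservation constraints of $\hat{\s}$ forcing these sum conditions), and that the contracted right-hand side is the same object with $x\leftrightarrow y$.

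The main obstacle I anticipate is bookkeeping rather than conceptual: matching the precise placement of the $\chi$ versus $\chi'$ weights (note $\mathbb{T}$ uses $\chi'$ on the ${\bf a},{\bf b}$ edges and $\chi$ on the ${\bf i},{\bf j}$ edges, which is exactly the asymmetry built into Lemma \ref{aaR} between (\ref{aaR>}) and (\ref{aa<R})) and verifying that the spectral arguments of the eigenvectors — which in Lemma \ref{aaR} come out as $\lambda/(\mu z)$ or $\lambda z/\mu$ — are consistently the constants $|\chi(1)\rangle$ or otherwise cancel, so that the contraction really closes up. One must also check the ordering of the $\hat{\s}$ factors (reversed, $k=m,\ldots,1$ and $k=n,\ldots,1$) is compatible with iterating the single-site contraction; this should follow because distinct copies of $\mathscr{A}_q$ commute, but it needs a remark. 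Once these are in place the theorem is immediate.
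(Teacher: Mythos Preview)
Your overall strategy is the same as the paper's: start from Proposition~\ref{pr:stt}, weight by the $\chi,\chi'$ factors, and contract the strings of $\hat{\s}$'s against the eigenvectors of Lemma~\ref{aaR} so that only the $\mathbb{T}$'s remain. That is exactly how the paper proceeds.

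There is, however, a genuine gap in your execution. The specialization $x'=y,\ y'=x$ is too restrictive: it forces the two transfer matrices to carry spectral parameters $x/y$ and $x'/y'=y/x$, which are mutual inverses. After relabeling you would only obtain the theorem for pairs $(u,v)$ with $uv=1$, not for independent $u,v$. The paper avoids this by keeping all four parameters $x,x',y,y'$ free throughout the contraction; the final identity then depends only on the two \emph{independent} ratios $x/y$ and $x'/y'$, which play the role of ``$x$'' and ``$y$'' in the theorem statement. Concretely, one takes the bra $\langle\chi(x/x')|$ and the ket $|\chi(y'/y)\rangle$ on the auxiliary (green) Fock space, inserts the weights $\chi'_{\bf a}\chi'_{{\bf a}'}\chi_{\bf j}\chi_{{\bf j}'}$ and $\chi'_{\bf b}(\tfrac{x}{y})\chi'_{{\bf b}'}(\tfrac{x'}{y'})\chi_{\bf i}(\tfrac{x}{y})\chi_{{\bf i}'}(\tfrac{x'}{y'})$, and then divides out the scalar $\langle\chi(\tfrac{x}{x'})|\chi(\tfrac{y'}{y})\rangle$ at the end.

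Two smaller corrections. First, the $\hat{\s}$ factors in (\ref{eq:ltol}) all act on the \emph{same} auxiliary Fock space (the green arrow), so ``distinct copies of $\mathscr{A}_q$ commute'' is not the reason the ordering works; rather, you peel them off one at a time, from the left against $\langle\chi|$ via (\ref{aa<R}) and from the right against $|\chi\rangle$ via (\ref{aaR>}). Second, your assignment of which relation handles which side is reversed: on the left of (\ref{eq:ltol}) you sum over the \emph{upper} indices ${\bf a},{\bf a}',{\bf b},{\bf b}'$ (weighted by $\chi'$) against the bra using (\ref{aa<R}); on the right you sum over the \emph{lower} indices ${\bf i},{\bf i}',{\bf j},{\bf j}'$ (weighted by $\chi$) against the ket using (\ref{aaR>}).
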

\begin{proof}
In (\ref{eq:ltol})  let us set ${\bf b}+{\bf b}'={\bf s}, {\bf i}+{\bf i}'={\bf r}$ and 
evaluate 
\begin{align*}
\sum_{\substack{{\bf a},{\bf a}',{\bf j},{\bf j}',{\bf b},{\bf b}',{\bf i},{\bf i}'\\
{\bf b}+{\bf b}'={\bf s},\,{\bf i}+{\bf i}'={\bf r}}}
\chi'_{\bf a}\chi'_{{\bf a}'}\chi_{\bf j}\chi_{{\bf j}'}\chi'_{\bf b}(\textstyle \frac{x}{y})\chi'_{{\bf b}'}(\textstyle \frac{x'}{y'})\chi_{\bf i}(\textstyle \frac{x}{y})
\chi_{{\bf i}'}(\textstyle \frac{x'}{y'})
\langle \chi(\textstyle \frac{x}{x'})|(\cdots)|\chi(\textstyle \frac{y'}{y})\rangle
\end{align*}
in each side. 
The matrix element 
$\langle \chi(\textstyle \frac{x}{x'})|(\cdots)|\chi(\textstyle \frac{y'}{y})\rangle$ 
is calculated along the Fock space on the green arrow in (\ref{fig}). 
For the left hand side,  (\ref{aa<R}) can be applied to show 
\begin{align*}
&\sum_{\substack{{\bf a},{\bf a}',{\bf b},{\bf b}'\\{\bf b}+{\bf b}'={\bf s}}}
\chi'_{\bf a}\chi'_{{\bf a}'}\chi'_{\bf b}(\textstyle \frac{x}{y})
\chi'_{{\bf b}'}(\textstyle \frac{x'}{y'})
\langle \chi(\textstyle \frac{x}{x'})|
\hat{\s}^{{a_m}{a'_m}}_{{a''_m}{a'''_m}}(\textstyle \frac{x}{x'}) \cdots 
\hat{\s}^{{a_1}{a'_1}}_{{a''_1}{a'''_1}}(\textstyle \frac{x}{x'}) 
\hat{\s}^{b_n  b'_n}_{b''_nb'''_n}(\textstyle \frac{y}{y'}) \cdots 
\hat{\s}^{b_1  b'_1}_{b''_1b'''_1}(\textstyle \frac{y}{y'}) \\
&\qquad =\chi'_{{\bf a}''}\chi'_{{\bf a}'''}
\chi'_{{\bf b}''}(\textstyle \frac{x}{y})
\chi'_{{\bf b}'''}(\textstyle \frac{x'}{y'})\langle \chi(\textstyle \frac{x}{x'})|.
\end{align*}
Similarly in the right hand side, (\ref{aaR>}) provides the simplification
\begin{align*}
&\sum_{\substack{{\bf j},{\bf j}',{\bf i},{\bf i}'\\{\bf i}+{\bf i}'={\bf r}}}
\chi_{\bf j}\chi_{{\bf j}'}\chi_{\bf i}(\textstyle \frac{x}{y})
\chi_{{\bf i}'}(\textstyle \frac{x'}{y'})
\hat{\s}^{j''_nj'''_n}_{j_nj'_n}(\textstyle \frac{y}{y'}) \cdots 
\hat{\s}^{j''_1j'''_1}_{j_1j'_1}(\textstyle \frac{y}{y'}) 
\hat{\s}^{i''_mi'''_m}_{i_m  i'_m}(\textstyle \frac{x}{x'})\cdots 
\hat{\s}^{i''_1i'''_1}_{i_1  i'_1}(\textstyle \frac{x}{x'})|\chi(\textstyle \frac{y'}{y})\rangle \\
&\qquad =\chi_{{\bf j}''}\chi_{{\bf j}'''}
\chi_{{\bf i}''}(\textstyle \frac{x}{y})
\chi_{{\bf i}'''}(\textstyle \frac{x'}{y'})|\chi(\textstyle \frac{y'}{y})\rangle.
\end{align*}
From these relations, we obtain
\begin{align*}
&\sum_{\substack{{\bf a}'',{\bf a}''',{\bf j},{\bf j}',{\bf b}'',{\bf b}''',{\bf i},{\bf i}'\\
{\bf b}''+{\bf b}'''={\bf s}, \,{\bf i}+{\bf i}'={\bf r}}}
\chi'_{{\bf a}''}\chi'_{{\bf a}'''}
\chi_{\bf j}\chi_{{\bf j}'}
\chi'_{{\bf b}''}(\textstyle \frac{x}{y})
\chi'_{{\bf b}'''}(\textstyle \frac{x'}{y'})
\chi_{\bf i}(\textstyle \frac{x}{y})
\chi_{{\bf i}'}(\textstyle \frac{x'}{y'})
T(\textstyle \frac{x}{y})^{{\bf a}''{\bf b}''}_{{\bf i} \ {\bf j}}
T(\textstyle \frac{x'}{y'})^{{\bf a}'''{\bf b}'''}_{{\bf i}'{\bf j}'}  \\
&=\sum_{\substack{{\bf a},{\bf a}',{\bf j}'',{\bf j}''',{\bf b},{\bf b}',{\bf i}'',{\bf i}'''\\
{\bf b}+{\bf b}'={\bf s}, \,{\bf i}''+{\bf i}'''={\bf r}}}
\chi'_{\bf a}\chi'_{{\bf a}'}\chi_{{\bf j}''}\chi_{{\bf j}'''}
\chi'_{\bf b}(\textstyle \frac{x}{y})
\chi'_{{\bf b}'}(\textstyle \frac{x'}{y'})
\chi_{{\bf i}''}(\textstyle \frac{x}{y})
\chi_{{\bf i}'''}(\textstyle \frac{x'}{y'})
T(\textstyle \frac{x'}{y'})^{{\bf a}'{\bf b}'}_{{\bf i}'''{\bf j}'''}
T(\textstyle \frac{x}{y})^{{\bf a} \ {\bf b}}_{{\bf i}''{\bf j}''}.
\end{align*}
Here we have divided the both sides 
by $\langle \chi(\textstyle \frac{x}{x'})|\chi(\textstyle \frac{y'}{y})\rangle
=\sum_{m \ge 0}\frac{(q^2)_m}{(q)_m^2}(\frac{xy'}{x'y})^m \neq 0$. 
In view of the definition (\ref{bT}) this is equivalent to 
\begin{align*}
&{\displaystyle \sum_{\substack{{\bf b},{\bf b}',{\bf i},{\bf i}'\\
{\bf b}+{\bf b}'={\bf s},\, {\bf i}+{\bf i}'={\bf r}}}}\!\!\!
(\textstyle \frac{x}{y})^{|\bf b|+|\bf i|}(\textstyle \frac{x'}{y'})^{|{\bf b}'|+|{\bf i}'|}
{\mathbb T}(\textstyle \frac{x}{y})^{\bf b}_{\bf i}{\mathbb T}
(\textstyle \frac{x'}{y'})^{{\bf b}'}_{{\bf i}'} 
=\!\!\!\!\!\!{\displaystyle \sum_{\substack{{\bf b},{\bf b}',{\bf i},{\bf i}'\\
{\bf b}+{\bf b}'={\bf s},\, {\bf i}+{\bf i}'={\bf r}}}}\!\!\!
(\textstyle \frac{x'}{y'})^{|{\bf b}'|+|{\bf i}'|}
(\textstyle \frac{x}{y})^{|\bf b|+|\bf i|}
{\mathbb T}(\textstyle \frac{x'}{y'})^{{\bf b}'}_{{\bf i}'}
{\mathbb T}({\textstyle \frac{x}{y}})^{\bf b}_{\bf i}.
\end{align*} 
\end{proof}

Let us isolate the special case 
${\bf s}=(0,{\ldots},0,s) \in (\Z_{\ge 0})^n$, 
${\bf r}=(0,{\ldots},0,r) \in (\Z_{\ge 0})^m$, 
which will be utilized in the next section.

\begin{corollary}
For any $s,r \in \Z_{\ge 0}$, the following equality is valid:
\begin{equation}\label{xySS}
\sum_{\substack{b,b',i,i'\\b+b'=s, \, i+i'=r}}x^{b+i}y^{b'+i'}
{\mathbb T}(x)^{0,{\ldots},0,b}_{0,{\ldots},0,i}\,
{\mathbb T}(y)^{0,{\ldots},0,b'}_{0,{\ldots},0,i'}=(x \leftrightarrow y).
\end{equation} 
In particular we have a commuting family of layer to layer transfer matrices:
\begin{equation}\label{eq:comm}
[{\mathbb T}(x)^{\bf 0}_{\bf 0},{\mathbb T}(y)^{\bf 0}_{\bf 0}]=0,
\end{equation}
where ${\bf 0}$ denotes $(0,\ldots, 0)$ either in $(\Z_{\ge 0})^m$ or $(\Z_{\ge 0})^n$.
\end{corollary}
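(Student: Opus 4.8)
The corollary is an immediate specialization of Theorem~\ref{Th:main}, so the task is purely to extract the right consequences. First I would substitute ${\bf s}=(0,\ldots,0,s)$ and ${\bf r}=(0,\ldots,0,r)$ directly into the statement of Theorem~\ref{Th:main}. The constraint ${\bf b}+{\bf b}'={\bf s}$ forces $b_k=b'_k=0$ for $k<n$ and $b_n+b'_n=s$, and similarly ${\bf i}+{\bf i}'={\bf r}$ forces $i_k=i'_k=0$ for $k<m$ and $i_m+i'_m=r$. Writing $b=b_n$, $b'=b'_n$, $i=i_m$, $i'=i'_m$, and observing $|{\bf b}|=b$, $|{\bf i}|=i$, $|{\bf b}'|=b'$, $|{\bf i}'|=i'$, the bilinear relation collapses term by term to \eqref{xySS}. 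No computation beyond bookkeeping is needed here.

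\textbf{Deriving \eqref{eq:comm}.} For the commutativity statement I would take $s=r=0$ in \eqref{xySS}. Then the sums over $b,b',i,i'$ are all trivial ($b=b'=i=i'=0$), the prefactors $x^{b+i}y^{b'+i'}$ and $y^{b+i}x^{b'+i'}$ both become $1$, and \eqref{xySS} reads ${\mathbb T}(x)^{\bf 0}_{\bf 0}\,{\mathbb T}(y)^{\bf 0}_{\bf 0}={\mathbb T}(y)^{\bf 0}_{\bf 0}\,{\mathbb T}(x)^{\bf 0}_{\bf 0}$, which is \eqref{eq:comm}. One should note that ${\mathbb T}(x)^{\bf 0}_{\bf 0}$ here means ${\mathbb T}(x)^{0,\ldots,0}_{0,\ldots,0}$ with $n$ zeros upstairs and $m$ zeros downstairs, matching the convention stated in the corollary.

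\textbf{Main obstacle.} There is essentially no obstacle; the only point requiring a moment's care is that the index identifications are consistent, i.e.\ that the specialization of ${\bf b}$ to $(0,\ldots,0,b)$ is compatible with the specialization of ${\bf s}$ to $(0,\ldots,0,s)$ through the relation ${\bf b}+{\bf b}'={\bf s}$ componentwise over $\Z_{\ge 0}$ --- since all entries are nonnegative, vanishing of the first $n-1$ components of the sum indeed forces vanishing of the first $n-1$ components of each summand, and likewise for ${\bf i},{\bf i}',{\bf r}$. Once this is observed the proof is a one-line appeal to Theorem~\ref{Th:main}.
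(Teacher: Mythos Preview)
Your proposal is correct and matches the paper's approach exactly: the paper simply introduces the corollary as the specialization ${\bf s}=(0,\ldots,0,s)$, ${\bf r}=(0,\ldots,0,r)$ of Theorem~\ref{Th:main} without further argument, and your observation that nonnegativity forces the first $n-1$ (resp.\ $m-1$) components of ${\bf b},{\bf b}'$ (resp.\ ${\bf i},{\bf i}'$) to vanish is precisely the bookkeeping that makes this specialization work.
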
   

\begin{example}\label{ex:s^00_0-2}
From Example \ref{ex:S^00_0} we have
\begin{align*}
&{\mathbb T}(x)^{0,0}_0 \,{\mathbb T}(y)^{0,0}_0
=\sum_{j_1,j_2,j'_1,j'_2}x^{j_1+j_2}y^{j'_1+j'_2}
\chi'_{j_1+j_2}\chi'_{j'_1+j'_2}\chi_{j_1,j_2}\chi_{j'_1,j'_2}
({\bf a}^+)^{j_2}{\bf k}^{j_1}({\bf a}^+)^{j'_2}{\bf k}^{j'_1}{\otimes}({\bf a}^+)^{j_1+j'_1} \\
&=\sum_{r,s,t}x^ry^s\chi'_r\chi'_s
({\bf a}^+)^{r+s-t}{\bf k}^{t}{\otimes}({\bf a}^+)^{t} f_{r,s,t}(q),
\end{align*}
where ${\bf k}\,{\bf a}^+ = q{\bf a}^+{\bf k}$ is used.
The coefficient $ f_{r,s,t}(q)$, which is $0$ unless $r+s\ge t$, is given by
\begin{equation*}
f_{r,s,t}(q)=
\sum_{\substack{j_1,j_2,j'_1,j'_2 \\j_1+j_2=r,\, j'_1+j'_2=s, \, j_1+j'_1=t}}
\frac{q^{j_1j'_2}}{(q)_{j_1}(q)_{j_2}(q)_{j'_1}(q)_{j'_2}}.
\end{equation*}
Therefore (\ref{eq:comm}) implies $f_{r,s,t}(q)=f_{s,r,t}(q)$.
In fact it is easily confirmed by 
comparing the coefficients of $z^t$ 
on the both sides of $(-z;q)_r(-zq^r;q)_s=(-z;q)_s(-zq^s;q)_r$.
\end{example}

\begin{example}\label{ex:s^00_0-3}
From Example \ref{ex:S^00_0} we have
\begin{align*}
&x{\mathbb T}(x)^{0,0}_1{\mathbb T}(y)^{0,0}_0+y{\mathbb T}(x)^{0,0}_0{\mathbb T}(y)^{0,0}_1 \\
&=\frac{x}{1-q} \sum_{j_1,j_2,j'_1,j'_2}x^{j_1+j_2}y^{j'_1+j'_2}\chi'_{j_1+j_2+1}\chi'_{j'_1+j'_2}\chi_{j_1,j_2,j'_1,j'_2}({\bf a}^+)^{j_2}{\bf k}^{j_1+1}({\bf a}^+)^{j'_2}{\bf k}^{j'_1}\otimes({\bf a}^+)^{j_1}{\bf k}({\bf a}^+)^{j'_1} \\
&+\frac{y}{1-q} \sum_{j_1,j_2,j'_1,j'_2}x^{j_1+j_2}y^{j'_1+j'_2}\chi'_{j_1+j_2}\chi'_{j'_1+j'_2+1}\chi_{j_1,j_2,j'_1,j'_2}({\bf a}^+)^{j_2}{\bf k}^{j_1}({\bf a}^+)^{j'_2}{\bf k}^{j'_1+1}\otimes({\bf a}^+)^{j_1+j'_1}{\bf k} \\
&=\frac{1}{1-q}\sum_{r,s,t}x^ry^s\chi'_r\chi'_s
\sum_{\substack{j_1,j_2,j'_1,j'_2\\j_1+j_2+1=r,j'_1+j'_2=s,j_1+j'_1=t}}q^{j_1j'_2+s}\chi_{j_1,j_2,j'_1,j'_2}({\bf a}^+)^{r+s-t-1}{\bf k}^{t+1}\otimes({\bf a}^+)^{t}{\bf k} \\
&+\frac{1}{1-q}\sum_{r,s,t}x^ry^s\chi'_r\chi'_s
\sum_{\substack{j_1,j_2,j'_1,j'_2\\j_1+j_2=r,j'_1+j'_2+1=s,j_1+j'_1=t}}q^{j_1j'_2}\chi_{j_1,j_2,j'_1,j'_2}({\bf a}^+)^{r+s-t-1}{\bf k}^{t+1}\otimes
({\bf a}^+)^{t}{\bf k} \\
&=\frac{1}{1-q}\sum_{r,s,t}
x^ry^s\chi'_r\chi'_s({\bf a}^+)^{r+s-t-1}{\bf k}^{t+1}
\otimes({\bf a}^+)^{t}{\bf k} \ (q^sf_{r-1,s,t}(q)+f_{r,s-1,t}(q)).
\end{align*}
Therefore (\ref{xySS}) with $s=0$ and $r=1$ implies
$q^sf_{r-1,s,t}(q)+f_{r,s-1,t}(q)=(r \leftrightarrow s)$.
From $f_{r,s,t}(q)=f_{s,r,t}(q)$, it is equivalent to 
$(1-q^s)f_{r-1,s,t}=(r \leftrightarrow s)$. 
In fact it is derived from 
$(-z;q)_{r-1}(-zq^{r-1};q)_s=(-z;q)_{s-1}(-zq^{s-1};q)_r$.
\end{example}

\section{Application to $n$-TAZRP}\label{sec:appli}

Here we specialize the results in Section \ref{sec:TE} and \ref{sec:lltm}
to $q=0$ and $m=n$.
All the objects remain well-defined.
In particular the oscillators ${\bf a}^+, {\bf a}^-, {\bf k}$ 
in this section mean those in $\mathscr{A}_0$ 
defined by (\ref{iyo}) and (\ref{rna}).
We write 
$\alpha_{\ge j} = \alpha_j+\alpha_{j+1}+\cdots + \alpha_n$ 
for an array $\alpha=(\alpha_1,\ldots, \alpha_n) \in 
(\Z_{\ge 0})^n$.

\subsection{\mathversion{bold}$R$-operators at $q=0$}\label{ss:q=0}

Since $\R^{abc}_{ijk}$ (\ref{def:R}) is a polynomial in $q$,
one can safely set
\begin{align}\label{tann}
R^{abc}_{ijk}:=\R^{abc}_{ijk}|_{q=0}
=\delta^a_{j+(i-k)_+}\delta^b_{{\rm min}(i,k)}\delta^c_{j+(k-i)_+},
\end{align}
where the latter equality is due to 
\cite[eq.(2.32)]{KO}
and $\R^{-1}=\R$ (\ref{R:pro}).

\begin{lemma}\label{le:ysgk}
${\hat \R}^{ab}_{ij}(z)|_{q=0} = z^{j-b}{\hat R}^{ab}_{ij}$.
\end{lemma}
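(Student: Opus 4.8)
The plan is to compare the two sides of the claimed identity directly as elements of $\mathscr{A}_0$, using the explicit formula (\ref{rsdef}) for $\hat{\R}^{ab}_{ij}(z)$ and specializing $q$ to $0$. First I would strip off the overall factor $\delta^{a+b}_{i+j}z^{j-b}$, which appears in (\ref{rsdef}) and is exactly the prefactor on the right-hand side of the claimed equality, so that it suffices to show $\sum_{\lambda+\mu=b}(-1)^\lambda q^{\lambda+\mu^2-ib}\binom{i}{\mu}_{q^2}\binom{j}{\lambda}_{q^2}({\bf a}^-)^\mu({\bf a}^+)^{j-\lambda}{\bf k}^{i+\lambda-\mu}$ reduces at $q=0$ to the operator $\theta(a\ge j)({\bf a}^+)^j{\bf k}^{\theta(a>j)}({\bf a}^-)^b$ of (\ref{vt}), where $a=i+j-b$ by the conservation law.

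The key step is the $q\to0$ limit of the $q$-binomial coefficients: $\binom{m}{k}_{q^2}\to 1$ for all $0\le k\le m$, while the power $q^{\lambda+\mu^2-ib}$ kills every term in which the exponent $\lambda+\mu^2-ib$ is positive and blows up when it is negative. Here I would argue that for the relevant range the exponent is always $\ge 0$ — one checks $\lambda+\mu^2-ib\ge 0$ using $\mu\le i$ and $\lambda+\mu=b$, so $ib\ge i\mu\ge\mu^2$ forces $\lambda+\mu^2-ib\le\lambda\le b$ on the wrong side; more carefully, $ib-\mu^2\ge ib-i\mu=i(b-\mu)=i\lambda\ge\lambda$ precisely when $i\ge1$ or $\lambda=0$, and the boundary cases $i=0$ are handled separately since then $\mu=0$ and the sum collapses. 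Thus at $q=0$ only the terms with $\lambda+\mu^2-ib=0$ survive, and a short case analysis shows this pins down a unique $(\lambda,\mu)$: either $\lambda=0,\mu=b$ (requiring $b\le i$, i.e.\ $a\ge j$) contributing $({\bf a}^-)^b({\bf a}^+)^j{\bf k}^{i-b}$, or $\lambda=b,\mu=0$ (requiring $i=0$, degenerate). Then one uses the $0$-oscillator relations (\ref{rna}), in particular ${\bf a}^-{\bf a}^+=1$, ${\bf a}^+{\bf a}^-=1-{\bf k}$, ${\bf a}^-{\bf k}=0$, ${\bf k}^2={\bf k}$, to normal-order $({\bf a}^-)^b({\bf a}^+)^j{\bf k}^{i-b}$ into the form $({\bf a}^+)^{j-b}{\bf k}^{\epsilon}$ for the appropriate $\epsilon\in\{0,1\}$, and matches this against $\theta(a\ge j)({\bf a}^+)^j{\bf k}^{\theta(a>j)}({\bf a}^-)^b$ after likewise normal-ordering the latter using ${\bf a}^+{\bf a}^-=1-{\bf k}$ repeatedly.

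I expect the main obstacle to be the bookkeeping in the normal-ordering step: one must carefully track when the ${\bf k}$ factor is present versus absent (the exponents $\theta(a>j)$ and $\theta(a>j_{\text{from vt}})$ versus the $i-b$ or similar from the limit), since ${\bf k}^m={\bf k}$ for $m\ge1$ but ${\bf k}^0=1$, and the cases $a=j$ (boundary, no ${\bf k}$) versus $a>j$ (interior, one ${\bf k}$) versus $a<j$ (vanishing) must all come out consistently from both expressions. An alternative and perhaps cleaner route, which I would keep in reserve, is to avoid (\ref{rsdef}) entirely and instead combine (\ref{arkr}) with the $q=0$ matrix coefficients $R^{abc}_{ijk}$ from (\ref{tann}): since $\hat{\R}^{ab}_{ij}(z)|k\rangle=z^{j-b}\sum_c R^{abc}_{ijk}|c\rangle$ and (\ref{tann}) gives $R^{abc}_{ijk}=\delta^a_{j+(i-k)_+}\delta^b_{\min(i,k)}\delta^c_{j+(k-i)_+}$, one reads off that $\hat{\R}^{ab}_{ij}(z)|_{q=0}$ sends $|k\rangle$ to $z^{j-b}|j+(k-i)_+\rangle$ whenever $b=\min(i,k)$ and $a=j+(i-k)_+$ and to $0$ otherwise; then one just checks that $\theta(a\ge j)({\bf a}^+)^j{\bf k}^{\theta(a>j)}({\bf a}^-)^b$ acting on $|k\rangle$ via (\ref{iyo}) produces exactly the same vector, using that $({\bf a}^-)^b|k\rangle=|k-b\rangle$ only when $k\ge b$ and $0$ otherwise, ${\bf k}^{\theta(a>j)}$ then enforces $k=b$ when $a>j$, and $({\bf a}^+)^j$ shifts up by $j$. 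This second approach reduces the whole lemma to matching indices in (\ref{tann}) and is likely the shortest; the only subtlety is confirming the equivalence of the index constraints $\{b=\min(i,k),\,a=j+(i-k)_+\}$ with the condition that the oscillator expression in (\ref{vt}) does not annihilate $|k\rangle$.
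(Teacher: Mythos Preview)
Your first approach contains a genuine error. You aim to show the exponent $\lambda+\mu^2-ib$ in (\ref{rsdef}) is nonnegative, but your own inequalities give the opposite: from $\mu\le i$ and $\lambda+\mu=b$ one has $ib-\mu^2\ge i\lambda\ge\lambda$, i.e.\ $\lambda+\mu^2-ib\le 0$. When the exponent is strictly negative the coefficient diverges as $q\to0$, so you cannot simply keep the exponent-zero term and drop the rest. Concretely, take $(i,j,a,b)=(2,1,1,2)$: the pair $(\lambda,\mu)=(1,1)$ has exponent $-2$ and contributes $-q^{-2}(1+q^2)\,{\bf a}^-{\bf k}^2$, which only becomes finite after using the $\mathscr{A}_q$ relation ${\bf a}^-{\bf k}^2=q^2{\bf k}^2{\bf a}^-$, yielding $-{\bf k}{\bf a}^-$ at $q=0$. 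Your recipe keeps only $(\lambda,\mu)=(0,2)$, giving $({\bf a}^-)^2{\bf a}^+\to{\bf a}^-$ in $\mathscr{A}_0$, whereas the correct answer is $\hat R^{1,2}_{2,1}={\bf a}^+({\bf a}^-)^2={\bf a}^- -{\bf k}{\bf a}^-$; the two differ on $|1\rangle$. The approach is salvageable if you first normal-order in $\mathscr{A}_q$ to absorb all negative $q$-powers and only then set $q=0$, but that is nontrivial bookkeeping you have not supplied.

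Your reserve approach is correct and is exactly the paper's proof. The paper compares matrix elements $\langle c|(\cdots)|k\rangle$: for the left side it combines (\ref{arkr}) with (\ref{tann}) to get $R^{abc}_{ijk}$, and for the right side it cites [Part~I, Lem.~5.4], which is precisely the computation of $\langle c|\hat R^{ab}_{ij}|k\rangle$ that you propose to carry out directly via (\ref{iyo}). So lead with that route; the index-matching you sketch at the end is straightforward and completes the argument.
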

\begin{proof}
Recall that $\hat{\R}^{ab}_{ij}(z)$ is specified by (\ref{arkr}) with (\ref{def:R}),
and $\hat{R}^{ab}_{ij}$ is defined in (\ref{vt}).
As the both sides are operators in $\mathscr{A}_0$,
it suffices to check the equality under the evaluation  
$\langle c |(\cdots)|k\rangle$ for arbitrary $c,k \in \Z_{\ge 0}$. 
From (\ref{arkr}),  we see that 
$\langle c |z^{b-j}{\hat \R}^{ab}_{ij}(z)|_{q=0}|k\rangle=
\R^{abc}_{ijk}|_{q=0}=R^{abc}_{ijk}$. 
On the other hand $\langle c |{\hat R}^{ab}_{ij}|k\rangle$ has been calculated in 
[Part I, Lem. 5.4] and the result agrees 
with the rightmost expression in (\ref{tann}).
\end{proof}

\subsection{\mathversion{bold}Layer to Layer transfer matrix at $q=0$}

In the reminder of this section 
we let $\mathbb{T}(z)^{\bf b}_{\bf i}$ denote
the layer to layer transfer matrix (\ref{bT})$|_{m=n}$ {\em specialized at $q=0$}.
By Lemma \ref{le:ysgk} and (\ref{chic}), we find 
\begin{equation}
\begin{picture}(300,88)(-95,-17)
\put(-110,26){$\mathbb{T}(z)^{\bf b}_{\bf i}\,= 
{\displaystyle \sum_{{\bf a}, {\bf j}}}\,\;z^{|{\bf j}|-|{\bf b}|}$}
\put(-7,40){\vector(1,0){56}}
\put(-7,31){\vector(1,0){56}}
\put(-7,9){\vector(1,0){56}}
\put(-7,0){\vector(1,0){56}}

\put(40,-5){\vector(0,1){54}}
\put(31,-5){\vector(0,1){54}}
\put(9,-5){\vector(0,1){54}}
\put(0,-5){\vector(0,1){54}}

\put(-4,53){${\scriptstyle b_1}$}
\put(6,53){${\scriptstyle b_2}$}
\put(18,53){$. . .$}
\put(36,53){${\scriptstyle b_n}$}

\put(-4,-13){${\scriptstyle j_1}$}
\put(7,-13){${\scriptstyle j_2}$}
\put(18,-13){$. . .$}
\put(38,-13){${\scriptstyle j_n}$}

\put(52,39){${\scriptstyle a_1}$}
\put(53,28){${\scriptstyle a_2}$}
\put(51,13){$\vdots$}
\put(54,-2){${\scriptstyle a_n}$}

\put(-16,39){${\scriptstyle i_1}$}
\put(-16,28){${\scriptstyle i_2}$}
\put(-15,13){$\vdots$}
\put(-16,-2){${\scriptstyle i_n}$}

\put(125,28){
\put(-10,0){\vector(1,0){20}}
\put(0,-10){\vector(0,1){20}}
\put(-17,-3.5){$i$}\put(12.5,-3.5){$a$}
\put(-2.4,13){$b$}\put(-2.3,-19){$j$}
\put(30,-4){$=\, \hat{R}^{ab}_{ij}$\, in (\ref{vt}).}
}
\end{picture}
\label{ask2}
\end{equation}

In what follows, the $n^2$-fold tensor product contained in 
(\ref{ask2}) will always be arranged according to the order
specified in Example \ref{ex:n=3}.

\begin{proposition}[Embedding of $X_\alpha(z)$ into 
$q=0$ layer to layer transfer matrix]\label{pr:sdkf}
For any $r \in \Z_{\ge 0}$, the 
${\mathbb T}(z)^{0, \ldots ,0,r}_{0, \ldots ,0,r}$ 
is expanded in terms of the $n$-$\mathrm{TAZRP}$ 
operator $X_\alpha(z)\; \mathrm{(\ref{Xz})}$ as  
\begin{equation}\label{S=X}
\begin{split}&{\mathbb T}(z)^{0, \ldots ,0,r}_{0, \ldots ,0,r}\\
&=z^{-r}\sum_{\alpha \in (\Z_{\ge 0})^n}X_\alpha(z) \otimes
\overbrace{({\bf a}^+)^{\alpha_{\ge n}}({\bf a}^-)^r \otimes 
\cdots \otimes ({\bf a}^+)^{\alpha_{\ge 1}}({\bf a}^-)^r}^{\rm diagonal}{\otimes}\overbrace{({\bf a}^+)^r{\otimes}{\cdots}
{\otimes}({\bf a}^+)^r}^{n-1}\otimes{\bf 1}^{\otimes N},
\end{split}
\end{equation}
where $N=(n-1)(n-2)/2$ and 
``diagonal" signifies the components corresponding 
to the vertices on the NE-SW diagonal of (\ref{ask2}).
\end{proposition}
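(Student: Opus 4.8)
The plan is to unpack the definition of $\mathbb{T}(z)^{\bf b}_{\bf i}$ at $q=0$ given in (\ref{ask2}) for the specific boundary data ${\bf b}=(0,\ldots,0,r)$, ${\bf i}=(0,\ldots,0,r)$, and to identify the resulting $n^2$-vertex configuration as a product of the corner piece $X_\alpha(z)$ sitting on the vertices strictly below the NE-SW antidiagonal, diagonal factors on the antidiagonal itself, a strip of $n-1$ factors just above it, and identities on the remaining $N=(n-1)(n-2)/2$ vertices. First I would set up coordinates: label the vertex at the intersection of the $i$-th horizontal line (from the bottom) and the $j$-th vertical line (from the right) by $(i,j)$, exactly as in Example \ref{ex:n=3}, and write $c_{ij}$ for the edge variables, with the convention that the left boundary of row $i$ carries $i_i$, the bottom boundary of column $j$ carries $j_j$, and the right/top boundaries carry $a_i$, $b_j$ respectively. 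The conservation law $\hat R^{ab}_{ij}=0$ unless $a+b=i+j$ attached to each vertex, together with the vanishing boundary data $i_1=\cdots=i_{n-1}=0=j_1=\cdots=j_{n-1}=b_1=\cdots=b_{n-1}$, forces most internal edges to be determined: I would run the conservation law inductively from the top-left corner of the lattice, showing that all horizontal edges in rows $1,\ldots,n-1$ entering from the left are $0$ and propagate as $0$ until they meet the incoming bottom-edge flow, and similarly that the columns $1,\ldots,n-1$ carry $0$ on the top. This is exactly the mechanism that collapses the $n\times n$ square into the staircase shape of (\ref{Xz}).

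Next I would partition the $n^2$ vertices into the four groups named in the statement. The vertices $(i,j)$ with $i+j\le n$ (strictly below the antidiagonal, after the collapse) should reproduce the weights $\hat R^{ab}_{ij}$ appearing in the corner transfer matrix $X_\alpha(z)$ of (\ref{Xz}); here I would match the summation variables of (\ref{Xz}) — the $a_k$'s on its free edges — with the surviving internal edges of $\mathbb{T}$, and check that the overall $z$-power $z^{|{\bf j}|-|{\bf b}|}$ combines with the power $z^{a_1+\cdots+a_n}$ in (\ref{Xz}) to give the claimed $z^{-r}$ prefactor times $X_\alpha(z)$. The antidiagonal vertices $(i,n+1-i)$ for $i=1,\ldots,n$ should each evaluate, by (\ref{vt}), to $({\bf a}^+)^{\alpha_{\ge n+1-i}}({\bf a}^-)^{r}$ up to a ${\bf k}^{\theta(\cdot)}$ factor which at $q=0$ (using $\hat R^{ab}_{ij}=\delta^{a+b}_{i+j}\theta(a\ge j)({\bf a}^+)^j{\bf k}^{\theta(a>j)}({\bf a}^-)^b$) I expect either to be absorbed or to vanish against the ${\bf a}^-$ on its right; I would feed in the edge values $b_n$-column data and the $\alpha$-dependent partial sums coming from the collapse to pin down the exponents. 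The vertices strictly above the antidiagonal split into the first super-antidiagonal $(i,n+2-i)$, $i=2,\ldots,n$, giving the $n-1$ factors $({\bf a}^+)^r$ (because their bottom and right edges carry $r$ through the collapsed staircase while left/top carry $0$), and everything further up, of which there are $N=(n-1)(n-2)/2$, each of which has all four edges forced to $0$ and hence contributes $\hat R^{00}_{00}={\bf 1}$.

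The main obstacle I anticipate is bookkeeping the edge variables through the conservation-law collapse precisely enough to read off the exponents $\alpha_{\ge j}$ on the diagonal and to verify that the ${\bf k}$-powers at $q=0$ behave correctly — in particular, confirming that on each antidiagonal vertex the factor ${\bf k}^{\theta(a>j)}$ does not survive to spoil the clean form $({\bf a}^+)^{\alpha_{\ge j}}({\bf a}^-)^r$, and that no extra numerical coefficient sneaks in. A clean way to handle this is to argue by comparing matrix elements: evaluate both sides of (\ref{S=X}) between $\langle c_1,\ldots,c_{n^2}|$ and $|k_1,\ldots,k_{n^2}\rangle$, use the explicit $q=0$ weight $R^{abc}_{ijk}=\delta^a_{j+(i-k)_+}\delta^b_{\min(i,k)}\delta^c_{j+(k-i)_+}$ from (\ref{tann}) at each vertex, and observe that the product of Kronecker deltas over the whole lattice forces a unique admissible internal configuration for given boundary, which one then checks factorizes in the asserted way. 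Alternatively, one can set up a short induction on $n$: the $n\times n$ case is obtained from the $(n-1)\times(n-1)$ case by adding the bottom row and rightmost column, and the tensor-product ordering convention in Example \ref{ex:n=3} is precisely designed so that the new factors appear in the right slots. I would present the matrix-element version as the cleaner of the two, relegating the edge-chasing to the observation that the delta functions leave exactly one term.
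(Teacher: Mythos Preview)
Your plan is essentially the paper's own argument: collapse the $n\times n$ lattice via the conservation law and the selection rule $a\ge j$ (equivalently $b\le i$) built into $\hat R^{ab}_{ij}$, then read off the four regions (SE triangle $=X_\alpha(z)$, antidiagonal, super-antidiagonal, NW triangle). Two small points deserve sharpening.

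First, the ${\bf k}$-factor on the antidiagonal is disposed of for a simpler reason than you suggest. At each antidiagonal vertex the incoming red edges force left $=r$, top $=r$, so by conservation the right and bottom edges are equal, say $\beta$; hence in $\hat R^{\beta,r}_{r,\beta}$ one has $a=j=\beta$ and therefore ${\bf k}^{\theta(a>j)}={\bf k}^0={\bf 1}$. Your alternative mechanism (``vanish against the ${\bf a}^-$ on its right'') would not work: from (\ref{rna}) one has ${\bf a}^-{\bf k}=0$ but \emph{not} ${\bf k}\,{\bf a}^-=0$.

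Second, the inductive collapse does not proceed purely from the top-left using conservation alone; you need the two-sided squeeze coming from $b\le i$ together with both the zero boundary data on rows/columns $1,\ldots,n-1$ \emph{and} the value $r$ at $b_n$ (and at $i_n$). Concretely, the chain of inequalities $b_n=r \le (\text{left of }(1,n)) = (\text{right of }(1,n-1)) = (\text{bottom of }(1,n-1)) \le \cdots$ forces the staircase of $r$'s from above, while the zeros force it from below. Once this is said, the matrix-element verification you propose is unnecessary; the edge values above the SE triangle are uniquely determined and the factorisation drops out exactly as in the paper's $n=3$ picture.
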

\begin{proof}
We explain the proof along the $n=3$ case.
The general case is similar.
The vertex $\hat{R}^{ab}_{ij}$ (\ref{vt}) is $0$ unless 
the conservation law $a+b=i+j$ is satisfied and $a\ge j, b\le i$.
This property and the boundary condition for 
$\mathbb{T}(z)^{0,0,r}_{0,0,r}$ restrict the 
sum (\ref{ask2}) to the following configurations:
\begin{equation*}
\begin{picture}(300,105)(30,-32)
\setlength\unitlength{0.4mm}
\thinlines

\put(208,53){\color{red}$r$}
\put(188,53){$0$}
\put(168,53){$0$}

\put(152,33){$0$}
\put(152,13){$0$}
\put(152.5,-7){\color{red} $r$}

\put(229,33){$a_1(=\!\beta_3)$}
\put(229,13){$a_2$}
\put(229,-7){$a_3$}

\put(212,23){$ \beta_3$} 
\put(196,18){$ \beta_2$} 
\put(192,2){$ \beta_2$} 
\put(177,-2){$ \beta_1$} 

\put(142,-23){$(\beta_1\!=)\,j_1$}
\put(188,-23){$j_2$}
\put(208,-23){$j_3$}

\put(60,15){${\displaystyle
\mathbb{T}(z)^{0,0,r}_{0,0,r} = z^{-r}\sum_{{\bf a}, \,{\bf j}}
z^{|{\bf j}|}}$}

\put(150,-5){
\put(10,0){

\multiput(29,40)(-2,0){16}{.}
\multiput(29,42)(0,2){5}{.}
\put(30.3,52){\vector(0,1){3}}

\multiput(9,20)(0,2){16}{.}
\multiput(9,20)(-2,0){6}{.}

\put(10,50){\vector(0,1){5}}

\put(10,-10){\line(0,1){10}}
\put(10,0){\line(1,0){20}}
\put(30,-10){\line(0,1){10}}
\put(30,0){\line(0,1){20}}
\put(50,20){\line(0,1){20}}
\put(30,20){\line(1,0){20}}
\put(50,40){\vector(1,0){15}}
\put(50,20){\vector(1,0){15}}
\put(30,0){\vector(1,0){35}}
\put(50,-10){\line(0,1){30}}

\put(30,40){\color{red}\line(1,0){20}} 
\put(50,40){\color{red}\vector(0,1){15}}
\put(10,20){\color{red}\line(1,0){20}} 
\put(30,20){\color{red}\line(0,1){20}}
\put(-1,0){\color{red}\line(1,0){11}}
\put(10,0){\color{red}\line(0,1){20}}
}
}
\end{picture}
\end{equation*}
Here edges on the red line are frozen to $r$ and  
those on the dotted lines are so to $0$. 
The black edges are yet to be summed over but 
$\beta_1\ge \beta_2 \ge \beta_3$ must be satisfied.
Thus introducing 
$\alpha=(\alpha_1,\alpha_2,\alpha_3)$ by 
$\beta_j = \alpha_{\ge j}$, we see that 
the SE quadrant yields $X_{\alpha}(z)$. 
(See Example \ref{ex:n=3} for the configurations with the 
same boundary condition.)
The ``diagonal" part in (\ref{S=X}) is deduced from 
$\hat{R}^{\beta_j r}_{r \beta_j}
=({\bf a}^+)^{\beta_j}({\bf a}^-)^r$.
The ``$n\!-\!1$" part comes from 
$\hat{R}^{r,0}_{0,r}=({\bf a}^+)^r$.
The NW quadrant gives ${\bf 1}^{\otimes N}$.
As for the spectral parameter,
$X_{\alpha}(z)$ needs $z^{|{\bf a}|}$ from each summand,
but this is equal to $z^{|{\bf j}|}$ by the conservation law.
\end{proof}
\begin{example}
For $n=2$ one has (red lines denotes $r$)
\begin{equation*}
\begin{picture}(500,117)(-5,-12)
\thinlines
%1st term
\put(50,80){
\put(-40,0){${\mathbb T}(z)^{0,r}_{0,r}
={\displaystyle \sum_{\alpha_1,\alpha_2}}{\displaystyle \sum_{j}}$}
\mbox{\begin{picture}(37,18)(-5,5)
\put(50,0){
%1
\put(6,14){\vector(0,1){10}}
%2
\put(18,14){\color{red}\vector(0,1){10}}
%3
\put(-2,14){\line(1,0){8}}
%4
\put(6,14){\color{red}\line(1,0){12}}
%5
\put(18,14){\vector(1,0){11}}
%6
\put(6,0){\color{red}\line(0,1){14}}
%7
\put(18,0){\line(0,1){14}}
%8
\put(-2,0){\color{red}\line(1,0){8}}
%9
\put(6,0){\line(1,0){12}}
%10
\put(18,0){\vector(1,0){11}}
%11
\put(6,-8){\line(0,1){8}}
%12
\put(18,-8){\line(0,1){8}}

\put(17,-14){\scriptsize{$j$}}
\put(30,-1){\scriptsize{$j+\alpha_1$}}
\put(30,13){\scriptsize{$\alpha_2$}}
}
\end{picture}
\put(-50,-40){$=z^{-r}{\displaystyle \sum_{\alpha_1,\alpha_2}}(z^{\alpha_1+\alpha_2}{\displaystyle \sum_{j}}z^j({\bf a}^+)^j{\bf k}^{\alpha_1}({\bf a}^-)^{\alpha_2}){\otimes}({\bf a}^+)^{\alpha_2}({\bf a}^-)^r{\otimes}({\bf a}^+)^{\alpha_1+\alpha_2}({\bf a}^-)^r{\otimes}({\bf a}^+)^r$}
\put(-50,-80){$=z^{-r}{\displaystyle \sum_{\alpha_1,\alpha_2}}X_{\alpha_1,\alpha_2}(z){\otimes}({\bf a}^+)^{\alpha_2}({\bf a}^-)^r{\otimes}({\bf a}^+)^{\alpha_1+\alpha_2}({\bf a}^-)^r{\otimes}({\bf a}^+)^r$,}

}
}
\end{picture}
\end{equation*}
where the last step is due to Example \ref{ex:n=2}.
\end{example}

\subsection{\mathversion{bold}Bilinear relations at $q=0$}

A simple inspection of the proof of Proposition \ref{pr:sdkf}
shows that 
$\mathbb{T}(z)^{0,\ldots, 0,s}_{0,\ldots, 0, r}=0$
unless $r\ge s$.
Therefore specialization of (\ref{xySS}) to $q=0$ leads to
\begin{corollary}\label{co:dery}
For any $r \in \Z_{\ge 0}$, 
the $q=0$ layer to layer transfer matrices obey the bilinear relation
\begin{equation*}
\sum_{\substack{r_1,r_2\\r_1+r_2=r}}x^{2r_1}y^{2r_2}
{\mathbb T}(x)^{0,\ldots, 0, r_1}_{0,\ldots, 0, r_1}
{\mathbb T}(y)^{0,\ldots, 0, r_2}_{0,\ldots, 0, r_2}=(x \leftrightarrow y).
\end{equation*}
\end{corollary}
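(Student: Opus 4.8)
The statement to be proven is Corollary~\ref{co:dery}: specializing the bilinear relation \eqref{xySS} to $q=0$ and to the case $m=n$. The plan is to start from \eqref{xySS}, which holds in $\mathscr{A}_q^{\otimes mn}$ for generic $q$, and to apply two reductions. First, set $m=n$, which is harmless since \eqref{xySS} and all constructions in Section~\ref{sec:lltm} are stated for arbitrary positive integers $m,n$. Second, specialize $q\to 0$; the key point here is that this specialization is legitimate because every entry $\mathscr{R}^{abc}_{ijk}$ lies in $\Z[q]$ by \eqref{def:R}, hence so do the matrix elements of $\mathbb{T}(z)^{\bf b}_{\bf i}$ (up to the overall rational prefactors $\chi,\chi'$, which are regular at $q=0$), so setting $q=0$ commutes with all the finite sums appearing in \eqref{xySS}.

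\textbf{Key steps.} After the specialization $q=0$, $m=n$, I would invoke the vanishing statement "$\mathbb{T}(z)^{0,\ldots,0,s}_{0,\ldots,0,r}=0$ unless $r\ge s$," which follows from a direct inspection of the configurations appearing in the proof of Proposition~\ref{pr:sdkf}: the frozen boundary edges together with the support conditions $a\ge j$, $b\le i$ and the conservation law $a+b=i+j$ for the vertex $\hat{R}^{ab}_{ij}$ in \eqref{vt} force, along the relevant NE--SW diagonal, the total flow from the bottom-row frozen value $r$ to be nonincreasing, so no weight of $\mathbb{T}$ can produce a top-row value $s$ exceeding $r$. Now in \eqref{xySS} with ${\bf b}=(0,\ldots,0,b)$, ${\bf b}'=(0,\ldots,0,b')$, $b+b'=s$ and ${\bf i}=(0,\ldots,0,i)$, ${\bf i}'=(0,\ldots,0,i')$, $i+i'=r$, this vanishing kills every term except those with $b\le i$ and $b'\le i'$. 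Next I would specialize further: take $s=r$ in \eqref{xySS}. Then $b+b'=i+i'$ combined with $b\le i$ and $b'\le i'$ forces $b=i$ and $b'=i'$; writing $r_1=b=i$ and $r_2=b'=i'$ gives $r_1+r_2=r$, and the prefactor $x^{b+i}y^{b'+i'}$ becomes $x^{2r_1}y^{2r_2}$. The surviving sum is exactly the left-hand side of Corollary~\ref{co:dery}, and the right-hand side is its image under $x\leftrightarrow y$, so the corollary follows.

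\textbf{Main obstacle.} The only substantive point is the vanishing claim $\mathbb{T}(z)^{0,\ldots,0,s}_{0,\ldots,0,r}=0$ for $s>r$; everything else is bookkeeping. The cleanest way to establish it is to run the argument in the proof of Proposition~\ref{pr:sdkf} with $(0,\ldots,0,r)$ replaced by $(0,\ldots,0,r)$ on the bottom/left and $(0,\ldots,0,s)$ on the top/right, and observe that the support conditions on $\hat{R}^{ab}_{ij}$ propagate the frozen value monotonically: each vertex can only decrease the "$b$-channel" value relative to the "$i$-channel" value it receives, so the value emerging at the top boundary is bounded by $r$. Since this monotonicity is precisely the combinatorial content already extracted in the proof of Proposition~\ref{pr:sdkf}, no new computation is required, and I would simply cite that inspection.
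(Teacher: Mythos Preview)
Your proof is correct and follows essentially the same route as the paper: the paper's argument is the single sentence preceding the corollary, namely that inspection of the proof of Proposition~\ref{pr:sdkf} gives the vanishing $\mathbb{T}(z)^{0,\ldots,0,s}_{0,\ldots,0,r}=0$ for $s>r$, after which the $q=0$ specialization of \eqref{xySS} yields the result. You have simply made explicit the step the paper leaves to the reader---setting $s=r$ so that $b\le i$, $b'\le i'$, $b+b'=i+i'$ force $b=i=r_1$, $b'=i'=r_2$---which is exactly right.
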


It is natural to substitute Proposition \ref{pr:sdkf} into 
Corollary \ref{co:dery} and seek the consequence on the 
TAZRP operators.
The result is given by 
\begin{proposition}[Bilinear identities of TAZRP operators]
For any $\alpha_,\beta \in (\Z_{\ge 0})^n$,
\begin{equation}\label{xXX}
x^{|\beta|}{\displaystyle \sum_{(\gamma,\delta)\ge(\alpha,\beta)}}X_{\gamma}(x)X_{\delta}(y)=(x \leftrightarrow y),
\end{equation}
where $(\gamma,\delta)\ge(\alpha,\beta)$ 
has been defined around $\mathrm{(\ref{smta})}$.
\end{proposition}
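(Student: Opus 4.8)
The plan is to substitute the embedding formula (\ref{S=X}) of Proposition \ref{pr:sdkf} into the $q=0$ bilinear relation of Corollary \ref{co:dery}, applied with the single value $r=1$, and then match coefficients of the auxiliary oscillator factors that live outside the TAZRP block. Writing ${\mathbb T}(x)^{0,\ldots,0,r_1}_{0,\ldots,0,r_1}$ via (\ref{S=X}), each side of Corollary \ref{co:dery} becomes a sum over two local states, say $\gamma$ (from the $x$-factor) and $\delta$ (from the $y$-factor), of $X_\gamma(x)X_\delta(y)$ tensored with a product of the ``diagonal'' blocks $({\bf a}^+)^{\gamma_{\ge j}}({\bf a}^-)^{r_1}$, $({\bf a}^+)^{\delta_{\ge j}}({\bf a}^-)^{r_2}$, the ``$n-1$'' blocks $({\bf a}^+)^{r_1}$, $({\bf a}^+)^{r_2}$, and ${\bf 1}^{\otimes N}$. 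With $r=r_1+r_2=1$, only $(r_1,r_2)=(1,0)$ and $(0,1)$ contribute, so the left side is $x^{2}\,(\cdots)_{r_1=1,r_2=0}+y^{2}\,(\cdots)_{r_1=0,r_2=1}$ and similarly with $x\leftrightarrow y$ on the right.

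First I would carry out the composition of the auxiliary oscillator blocks on each diagonal slot. In slot $j$ the product is $({\bf a}^+)^{\gamma_{\ge j}}({\bf a}^-)^{r_1}({\bf a}^+)^{\delta_{\ge j}}({\bf a}^-)^{r_2}$; using the $0$-oscillator relations (\ref{rna}) — in particular ${\bf a}^-{\bf a}^+=1$, ${\bf a}^+{\bf a}^-=1-{\bf k}$, ${\bf a}^-{\bf k}=0$, ${\bf k}{\bf a}^+=0$ — these collapse. With $(r_1,r_2)=(1,0)$ the slot becomes $({\bf a}^+)^{\gamma_{\ge j}}{\bf a}^-({\bf a}^+)^{\delta_{\ge j}}=({\bf a}^+)^{\gamma_{\ge j}+\delta_{\ge j}-1}$ when $\delta_{\ge j}\ge 1$, and $({\bf a}^+)^{\gamma_{\ge j}}{\bf a}^-$, which is killed upon pairing against the state coming from the $n-1$ block unless... — the key point is that taking the matrix element/trace that realizes $X_\alpha$ forces a selection rule on $\gamma,\delta$. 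The same collapse applied to the ``$n-1$'' slots and to the free SE boundary of ${\mathbb T}$ (recall ${\mathbb T}$ has the SE-free boundary, so these blocks are read off by pairing with $\langle 0|$ or by the implicit trace structure inherited from (\ref{bT}) at $q=0$) produces nonvanishing contributions precisely when $(\gamma,\delta)$ satisfies $\gamma_{\ge j}+ (\text{shift}) = \alpha_{\ge j}$ etc. — i.e. exactly the relation $(\gamma,\delta)\ge(\alpha,\beta)$ of (\ref{smta}), together with the power of $x$ being $x^{|\beta|}$ coming from the extra $x^{2r_1}$ and the $z^{-r}$ prefactor in (\ref{S=X}). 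Reading off the coefficient of the surviving auxiliary monomial $({\bf a}^+)^{\alpha_{\ge n}}\otimes\cdots\otimes({\bf a}^+)^{\alpha_{\ge 1}}\otimes({\bf a}^+)^{\text{something}}\otimes\cdots$ on both sides of Corollary \ref{co:dery} then yields (\ref{xXX}).

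The main obstacle I expect is bookkeeping the auxiliary-slot contractions cleanly: showing that the product of diagonal and $n-1$ oscillator blocks for the pair $(\gamma,\delta)$ — after applying (\ref{rna}) — is nonzero and proportional to the corresponding block for $\alpha$ if and only if $(\gamma,\delta)\ge(\alpha,\beta)$, and that the constant of proportionality is exactly $1$ (not, say, a multiplicity). Concretely, one must verify that in each slot the ${\bf a}^-$'s annihilate all but one rearrangement, that the constraint $\gamma_{\ge 1}\ge\gamma_{\ge 2}\ge\cdots$ (automatic) combines with the $\delta$ data to reproduce the piecewise definition in (\ref{smta}): species strictly below some level $l$ all moved, partial move at level $l$, nothing above. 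A convenient way to organize this is to compare matrix elements $\langle\mathbf{c}|\cdots|\mathbf{0}\rangle$ slot by slot, exactly as in the proof of Proposition \ref{pr:sdkf}, so that the selection rule emerges from the same ${\bf k}$-insertion combinatorics; then (\ref{xXX}) is just the coefficient extraction, and the $x^{|\beta|}$ prefactor is tracked through the $x^{2r_1}=x^2$ weight and the $z^{-r}$ in (\ref{S=X}) evaluated at $r_1=1$. The $q=0$ degeneration of the SE-free boundary weights $\chi'_\bullet\chi_\bullet$ in (\ref{bT}) must also be checked to contribute only overall constants that cancel between the two sides.
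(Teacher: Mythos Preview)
Your proposal has a genuine gap: you restrict to $r=1$ in Corollary \ref{co:dery}, but (\ref{xXX}) must hold for arbitrary $\alpha,\beta$ with $|\beta|$ unbounded. With $r=1$ the prefactor $x^{2r_1}\cdot x^{-r_1}=x^{r_1}$ coming from (\ref{S=X}) can only produce $x^0$ or $x^1$, so at best you recover the cases $|\beta|\le 1$. The paper instead sets $r=|\beta|$ for each fixed $(\alpha,\beta)$; this is essential, because the diagonal factors $({\bf a}^+)^{\gamma_{\ge j}}({\bf a}^-)^{r}({\bf a}^+)^{\delta_{\ge j}}$ must carry enough $({\bf a}^-)$-power to encode all of $\beta$ after reduction via (\ref{rna}).

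Second, your extraction mechanism is misconceived. There is no trace or pairing with states here; Corollary \ref{co:dery} is an operator identity on $F^{\otimes n^2}$, and the correct extraction is to expand each diagonal slot in the basis $\{({\bf a}^+)^f({\bf a}^-)^g\mid f,g\ge 0\}$ of $\mathscr{A}_0$ (see the remark after (\ref{rna})) and match coefficients of a specific tensor monomial. Concretely one reduces $({\bf a}^+)^{\gamma_{\ge j}}({\bf a}^-)^r({\bf a}^+)^{\delta_{\ge j}}$ to $({\bf a}^+)^{\gamma_{\ge j}}({\bf a}^-)^{r-\delta_{\ge j}}$ or $({\bf a}^+)^{\gamma_{\ge j}+\delta_{\ge j}-r}$ according as $\delta_{\ge j}\le r$ or not, and then picks out the coefficient of the tensor whose $j$-th slot is $({\bf a}^+)^{\alpha_{\ge j}}({\bf a}^-)^{r-\beta_{\ge j}}$ for $j>l$ and $({\bf a}^+)^{\alpha_{\ge j}+\beta_{\ge j}-r}$ for $j\le l$, where $l$ is the smallest index with $\beta_l>0$. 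The conditions forcing this match are precisely $(\gamma,\delta)\ge(\alpha,\beta)$. Your ``surviving auxiliary monomial'' $({\bf a}^+)^{\alpha_{\ge n}}\otimes\cdots\otimes({\bf a}^+)^{\alpha_{\ge 1}}$ carries no $({\bf a}^-)$-part and hence cannot encode $\beta$ at all. Finally, the ``$n-1$'' and ${\bf 1}^{\otimes N}$ blocks contribute a factor that is identical across every term in (\ref{hzks}) and simply drops out; they are not used to impose any selection rule.
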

\begin{proof}
Fix $\alpha_,\beta \in (\Z_{\ge 0})^n$.
Set $r= |\beta |$ and consider 
Corollary \ref{co:dery}  with respect to this $r$:
\begin{align}\label{hzks}
x^{2r}{\mathbb T}(x)^{0,\ldots, 0, r}_{0,\ldots,0, r}
{\mathbb T}(y)^{0,\ldots, 0, 0}_{0,\ldots, 0, 0} + 
\sum_{\substack{r_1< r, r_2>0 \\r_1+r_2=r}}x^{2r_1}y^{2r_2}
{\mathbb T}(x)^{0,\ldots, 0, r_1}_{0,\ldots, 0, r_1}
{\mathbb T}(y)^{0,\ldots, 0, r_2}_{0,\ldots, 0, r_2}=(x \leftrightarrow y).
\end{align}
The first term here is calculated by applying 
Proposition \ref{pr:sdkf} as  
\begin{equation}\label{akci}
x^r\sum_{\gamma,\delta}X_{\gamma}(x)X_{\delta}(y)\otimes
\Bigl(\otimes_{n \ge i \ge 1}
({\bf a}^+)^{\gamma_{\ge i}}({\bf a}^-)^r({\bf a}^+)^{\delta_{\ge i}}\Bigr)
\otimes\overbrace{({\bf a}^+)^r \otimes\cdots
\otimes({\bf a}^+)^r}^{n-1}\otimes{\bf 1}^{\otimes N},
\end{equation}
where the sum is over 
$\gamma, \delta \in (\Z_{\ge 0})^n$ and 
$\otimes_{n \ge i \ge 1}$ is arranged from $i=n$ in the left 
to $i=1$ in the right.
As this term exemplifies, the rightmost $n-1+N$ components only 
yield a common overall factor free from $x,y$ for all the terms appearing in (\ref{hzks}).
Therefore we omit them in the sequel and call the resulting rightmost part 
$\otimes_{n \ge i \ge 1}(\cdots)$ as the {\em diagonal component}.

From (\ref{rna}) the product 
$({\bf a}^+)^{f}({\bf a}^-)^r({\bf a}^+)^{g}$ is reduced to 
$({\bf a}^+)^f({\bf a}^-)^{r-g}$ if $r \ge g$ and 
$({\bf a}^+)^{f+g-r}$ if $r \le g$.
According to this alternative, 
(\ref{akci}) is expanded, in view of 
$\delta_{\ge n} \le \cdots \le \delta_{\ge 1}$, as
\begin{align}\label{anm}
x^r\sum_{m=0}^n
\sum_{\substack{\gamma,\delta\\ \delta_{\ge m+1}<r\le\delta_{\ge m}}}
\!\!\!\!\!\!\!\!\! X_{\gamma}(x)X_{\delta}(y)\otimes
\begin{bmatrix} 
\gamma_{\ge n}\;\;  \ldots  \;\;\gamma_{\ge m+1} &
\! \! \! \! \! \gamma_{\ge m}\!+\!\delta_{\ge m} \!-\!r  \;\ldots\;
 \gamma_{\ge 1}\!+\!\delta_{\ge 1} \!-\! r\\ 
r\!-\!\delta_{\ge n} \;\ldots \; r\!-\!\delta_{\ge m+1} 
&0  \quad \quad  \quad\ldots \quad  \quad  \quad 0
\end{bmatrix},
\end{align}
where $\delta_{\ge 0} = \infty, \delta_{\ge n+1} = -1$ and 
$\left[ f_n, ... , f_1 \atop g_n, ... , g_1\right]=
\otimes_{n \ge i \ge 1}({\bf a}^+)^{f_i}({\bf a}^-)^{g_i}$.
We will only encounter the situation $g_n \ge \cdots \ge g_1$ 
in the sequel. 
Notice that the second term in (\ref{hzks}) and 
the $m=0$ term in (\ref{anm})  have the diagonal component 
of the form $\left[ f_n, ... , f_1 \atop g_n, ... , g_1\right]$ with 
$\forall g_i \ge 1$ only, whereas 
the $m \in [1,n]$ terms in (\ref{anm}) contain
those with $g_n \ge \cdots \ge g_{m+1} > g_m=\cdots = g_1=0$.
At this point we utilize the fact that 
$\{({\bf a}^+)^f({\bf a}^-)^g| f,g  \in \Z_{\ge0}\}$ forms a basis 
of $\mathscr{A}_0$.
See the remark after (\ref{rna}).
Thus (\ref{hzks}) leads to the separated identities for each $m \in [1,n]$:
\begin{align}\label{anch}
x^r\!\!\!\!\!\!\!\!\!\!\!
\sum_{\substack{\gamma,\delta\\ \delta_{\ge m+1}<r\le\delta_{\ge m}}}
\!\!\!\!\!\!\!\!\! X_{\gamma}(x)X_{\delta}(y)\otimes
\begin{bmatrix} 
\gamma_{\ge n}\;\;  \ldots  \;\;\gamma_{\ge m+1} &
\! \! \! \! \! \gamma_{\ge m}\!+\!\delta_{\ge m} \!-\!r  \;\ldots\;
 \gamma_{\ge 1}\!+\!\delta_{\ge 1} \!-\! r\\ 
r\!-\!\delta_{\ge n} \;\ldots \; r\!-\!\delta_{\ge m+1} 
&0  \quad \quad  \quad\ldots \quad  \quad  \quad 0
\end{bmatrix} = (x\leftrightarrow y).
\end{align}
One can further separate (\ref{anch}) 
according to the diagonal component.

\vspace{0.2cm}
(i)  Case $r=|\beta|>0$. 
There is a unique $l \in [1,n]$ 
such that $\beta=(0,\ldots, 0, \beta_l,\dots, \beta_n)$ and $\beta_l>0$.
Pick the terms in the left hand side of 
$(\ref{anch})|_{m=l}$ whose diagonal component is 
\begin{align}\label{elvk}
\begin{bmatrix} 
\alpha_{\ge n}\;\;  \ldots  \;\;\alpha_{\ge l+1} &
\! \! \! \! \! \alpha_{\ge l}\!+\!\beta_{\ge l} \!-\!r  \;\ldots\;
 \alpha_{\ge 1}\!+\!\beta_{\ge 1} \!-\! r\\ 
r\!-\!\beta_{\ge n} \;\ldots \; r\!-\!\beta_{\ge l+1} 
&0  \quad \quad  \quad\ldots \quad  \quad  \quad 0
\end{bmatrix}.
\end{align}
It amounts to imposing the following conditions on the sum 
$x^r \sum_{\gamma,\delta}X_{\gamma}(x)X_{\delta}(y)$:
\begin{align*}
\gamma_j+\delta_j = \alpha_j\;\;(j \in [1,l-1]),\quad\;
\gamma_l+\delta_l=\alpha_l+\beta_l,\;
\delta_l \ge \beta_l,\quad\;
(\gamma_j,\delta_j)=(\alpha_j, \beta_j)\;\; (j \in [l+1,n]).
\end{align*}
By (\ref{smta}) this is nothing but  
$(\gamma, \delta) \ge (\alpha, \beta)$,
proving (\ref{xXX}).

\vspace{0.2cm}
(ii)  Case $r=|\beta|=0$. 
We have $\beta={\bf 0} \in (\Z_{\ge 0})^n$.
Pick the terms in the left hand side of 
$(\ref{anch})|_{m=n}$ whose diagonal component is 
$(\ref{elvk})|_{\beta={\bf 0}, r=0}$.
It leads to the sum 
$x^r \sum_{\gamma,\delta}X_{\gamma}(x)X_{\delta}(y)$
with the condition
$\gamma_j+\delta_j = \alpha_j$ for $j \in [1,n]$.
Again by (\ref{smta}) 
this is equivalent to $(\gamma, \delta) \ge (\alpha, \beta={\bf 0})$.
\end{proof}

\vspace{0.2cm}
{\it Proof of Theorem \ref{main}}. 
Differentiate (\ref{xXX}) with respect to $x$ and set $x=y=z$. 
The result reads
\begin{align}\label{bXX}
z\sum_{(\gamma,\delta)\ge(\alpha,\beta)}
(X'_\gamma(z)X_\delta(z)-X_\gamma(z)X'_\delta(z))
=-|\beta|\sum_{(\gamma,\delta)\ge(\alpha,\beta)}X_\gamma(z)X_\delta(z).
\end{align}
Let $\{(\alpha^{(i)},\beta^{(i)})\}$ be the set of minimal elements 
in $\{(\gamma,\delta)|(\gamma,\delta)>(\alpha,\beta)\}$ 
with respect to $\ge$.
By the definition 
$(\alpha^{(i)},\beta^{(i)})>(\alpha,\beta)$ 
and $|\beta^{(i)}|=|\beta|+1$ are valid. 
Here is an $n=2$ example for 
$(\alpha,\beta)=((1,2),(0,0))$, where the elements in 
$\{(\gamma,\delta)|(\gamma,\delta)>(\alpha,\beta)\}$ 
are partially ordered as follows:
\begin{equation*}
\begin{picture}(150,115)(-30,0)
\put(0,-20){
\put(-20,120){$((1,2),(0,0))$}
\put(-35,100){\vector(1,1){15}}
\put(-70,90){$((0,2),(1,0))$}
\put(50,100){\vector(-1,1){15}}
\put(35,90){$((1,1),(0,1))$}
\put(35,70){\vector(1,1){15}}
\put(5,56){$((0,1),(1,1))$}
\put(100,70){\vector(-1,1){15}}
\put(85,56){$((1,0),(0,2))$}
\put(140,35){\vector(-1,1){15}}
\put(130,25){$((0,0),(1,2))$}
}
\end{picture}
\end{equation*}
Here $(\gamma,\delta)\rightarrow(\alpha,\beta)$ denotes $(\gamma,\delta)\ge(\alpha,\beta)$. The minimal elements
are $((0,2),(1,0))$ and $((1,1),(0,1))$.
We consider the relation
$(\ref{bXX})|_{(\alpha,\beta)\to (\alpha^{(i)},\beta^{(i)})}$
and subtract it from $(\ref{bXX})$
for all the minimal elements $(\alpha^{(i)},\beta^{(i)})$.
In the process 
each $(\gamma,\delta)$-term in (\ref{bXX}) except $(\alpha, \beta)$ 
is subtracted exactly once because 
$\{(\alpha,\beta)|(\gamma,\delta)\ge(\alpha,\beta)\}$ 
with fixed $(\gamma,\delta)$ is a totally ordered set with respect to $\ge$ and therefore, any $(\gamma,\delta)$ such that 
$(\gamma,\delta)>(\alpha,\beta)$ has the unique minimal element 
$(\alpha^{(i)},\beta^{(i)})$ 
satisfying $(\gamma,\delta)\ge(\alpha^{(i)},\beta^{(i)})$.
Thus the subtraction yields
\begin{equation}\label{ngjj}
\begin{split}
z(X'_\alpha(z)X_\beta(z)-X_\alpha(z)X'_\beta(z))
&=\sum_{(\gamma,\delta)>(\alpha,\beta)}
X_\gamma(z)X_\delta(z)-|\beta|X_\alpha(z)X_\beta(z)\\
&= \sum_{\gamma,\delta}
h^{\alpha,\beta}_{\gamma,\delta}X_\gamma(z)X_\delta(z)
\end{split}
\end{equation}
owing to (\ref{hrk2}).
The proof is finished by setting $z=1$ in (\ref{ngjj}).
\qed

\vspace{0.2cm}
One may undo the specialization to $z=1$ in (\ref{XX})
and consider the $z$-dependent matrix product
$\mathrm{Tr}(X_{ \sigma_1}(z)\cdots X_{ \sigma_L}(z))$.
The Baxterized hat relation (\ref{ngjj}) tells that it still satisfies the 
steady state probability condition.
However, by the uniqueness of the steady state,
it coincides with the $z=1$ case up to an overall power of $z$.

\section*{Acknowledgments}
This work is supported by 
Grants-in-Aid for Scientific Research No.~15K04892,
No.~15K13429 and No.~23340007 from JSPS.

\end{document}